\documentclass[11pt]{article}
\usepackage{fullpage}
\usepackage{amsmath, amsthm, slashed}
\usepackage[left=1in,top=1in,right=1in,bottom=1in,headheight=3ex,headsep=3ex]{geometry}
\usepackage{graphicx}
\usepackage{float}

\usepackage{indentfirst}

\newtheorem{theorem}{Theorem}

\usepackage{amsmath,amssymb,amsfonts,amsthm,amsbsy,amstext}
\usepackage[authordate,backend=b
iber]{biblatex-chicago}
\usepackage{graphicx}
\usepackage{wrapfig}
\usepackage{multicol}
\usepackage{geometry}
\usepackage{amssymb}
\usepackage{mathtools}
\usepackage{siunitx}
\usepackage{color,soul}
\DeclareMathOperator*{\argmax}{arg\,max}

\usepackage{times}
\usepackage[T1]{fontenc}
\usepackage[nodayofweek]{datetime}

\usepackage{setspace}
\usepackage{multicol}
\usepackage{url}
\usepackage{lastpage}
\usepackage{amsmath}
\usepackage{layout}
\usepackage{hyperref}
\usepackage{cleveref}
\usepackage{epigraph} 
\usepackage{algorithm}
\usepackage{algpseudocode}
\usepackage{array, xcolor}
\usepackage{color}
\usepackage{subfig}
\definecolor{clemsonorange}{HTML}{EA6A20}
\hypersetup{colorlinks,breaklinks,linkcolor=blue,urlcolor=blue,anchorcolor=clemsonorange,citecolor=blue}

\newcommand{\pare}[1]{\left(#1\right)}
\newcommand{\brac}[1]{\left[#1\right]}

\newcommand{\dis}{\displaystyle}
\usepackage{url}

\newcommand{\R}{\mathbb R}

\newcommand{\dd}{\partial}

\newcommand{\comment}[1]{}

\newcommand{\Var}{\text{Var}}

\newtheorem{corollary}{Corollary}[theorem]
\newtheorem{lemma}[theorem]{Lemma}

\theoremstyle{definition}

\newtheorem{definition}{Definition}

\crefname{assumption}{assumption}{assumptions}
\Crefname{assumption}{Assumption}{Assumptions}
\crefname{assumptionalt}{assumption}{assumptions}
\Crefname{assumptionalt}{Assumption}{Assumptions}

\usepackage{tkz-graph} 
\usetikzlibrary{arrows,automata, positioning}
\usetikzlibrary{shapes.geometric}%

\usepackage{titlesec}
\titleformat*{\section}{\large\bfseries}
\titleformat*{\subsection}{\bfseries}
\titleformat*{\subsubsection}{\bfseries}
\titleformat*{\paragraph}{\bfseries}
\titleformat*{\subparagraph}{\bfseries}

\title{A Comment on Rational Inattention with Arbitrary Choice Sets}
\author{Christopher W. Engh}

\begin{document}
\maketitle 

\begin{abstract}
    This comment points out that rational inattention is a nested regularized optimal transport problem. We use entropic optimal transport to establish the main results in \textcite{matvejka2015rational,caplin2019rational} and extend them to arbitrary choice sets.
\end{abstract}

\section{Introduction}

At the dawn of the millennium, the pioneering work of \textcite{sims1998stickiness,sims2003implications,sims2005rational} argued that the inertial behaviour of economic variables could be a consequence of optimizing agents with limited information-processing capacity. Initially, rational inattention centered on the case of continuous action spaces (such as $\R$). About a decade ago, \textcite{matvejka2015rational,caplin2019rational} developed a tractable framework which limited the agent to a finite set of actions and established the key results on discrete rational inattention under Shannon information costs. Around the same time, the groundbreaking work of \textcite{cuturi2013sinkhorn} on fast computation of optimal transport via entropy regularization sparked a renewed interest in the \textit{Schr\"odinger Bridge problem}, which has since found far-ranging applications from image processing to generative AI. The purpose of this comment is to show how these problems are connected and how results from entropic optimal transport can be applied to extend the tractable discrete choice rational inattention framework to arbitrary choice sets. \\

Let us start with a quick review of rational inattention. Given a finite set of actions $A$, a set of states $\Omega$, a prior $\mu$ over $\Omega$, and a utility function $u:A\times\Omega\to \R$, the rationally inattentive agent commits to an \textit{information policy} $P(\alpha|\omega)$ mapping states $\omega \in \Omega$ to a distribution over action recommendation signals $\alpha \in A$, with the goal of maximizing expected utility less the Shannon information between $\alpha $ and $\omega$. \\

Mathematically, the problem is to choose a joint distribution $P\in \Delta(A\times\Omega)$, satisfying \textit{Bayes plausibility} $P_\omega=\mu$, which maximizes the objective
\begin{align}\label{MMobjective}
    \max_{P_\omega=\mu} \int \sum_{\alpha\in A}\frac{u(\alpha,\omega)}{\lambda}\ P(\alpha|\omega)\ d\mu(\omega)-D_{KL}(P\|P_\alpha\otimes P_\omega)
\end{align}
where $P_\alpha, P_\omega$ are the respective marginals, and $D_{KL}(\cdot\|\cdot)$ is the \textit{Kullback-Leibler divergence}
\begin{align*}
    D_{KL}(Q\|R)=\begin{cases}
        \int \pare{\frac{dQ}{dR}}\log\pare{\frac{dQ}{dR}}\ dR& \text{if }Q\ll R\\
        +\infty& \text{else}
    \end{cases}
\end{align*}
$D_{KL}(P\|P_\alpha\otimes P_{\omega})\equiv D_{KL}(P\|P_\alpha\otimes \mu)$ is the \textit{Shannon information between $a$ and $\omega$ under $P$}. In words, the payoff of $P$ is the expected utility under $P$ minus the mutual information encoded in $P$.\\

A natural extension to infinite spaces is as follows. Let $A$ and $\Omega$ be Polish spaces, the latter equipped with a prior $\mu\in\Delta\Omega$. Let $u:A\times \Omega\to (-\infty,\overline u]$ be measurable. The agent commits to an information policy $P\in \Delta(A\times \Omega)$ which maximizes the objective
\begin{align}\label{contobjective}
    \sup_{P_\omega=\mu}\int \frac{u(\alpha,\omega)}{\lambda}\ dP- D_{KL}(P\|P_\alpha\otimes P_\omega) && \lambda> 0
\end{align}
Punting on the question of existence for a moment, compare \cref{contobjective} to the \textit{Schr\"odinger Bridge problem} (SBP) from entropic optimal transport: given two marginals $\nu\in\Delta A, \mu\in\Delta \Omega$, the SBP is to find a joint coupling $P\in \Delta(A\times\Omega)$ which maximizes the objective
\begin{align}\label{eq:SBP}
    \sup_{P_\alpha=\nu,\ P_\omega=\mu} \int \frac{u(\alpha,\omega)}{\lambda} \ dP-D_{KL}(P\|P_\alpha\otimes P_\omega) && \lambda> 0
\end{align}
A coupling $P$ which attains the supremum in \cref{eq:SBP} is called a \textit{Schr\"odinger Bridge} between $\nu$ and $\mu$.  \\

The rational inattention objective in \cref{contobjective} looks very similar to the SBP in \cref{eq:SBP}. The \textbf{only} difference is that the SBP imposes a second marginal constraint $P_\alpha=\nu$. Given the resurgence in interest in the SBP, it seems eminently natural to consider looking at the rational inattention objective as a two-step problem with a nested SBP:
\begin{align}\label{eq:SBPapproach}
    \sup_{\nu\in\Delta A} V(\nu) && V(\nu):=\sup_{P\in\Pi(\nu,\mu)} \int \frac{u(\alpha,\omega)}{\lambda} \ dP-D_{KL}(P\|\nu\otimes \mu) 
\end{align}
where
\begin{align*}
    \Pi(\nu,\mu):=\bigg\{P\in \Delta(A\times\Omega): P_\alpha=\nu,\ P_\omega=\mu\bigg\}
\end{align*}
is standard notation in optimal transport for the set of probability measures over $A\times\Omega$ for which $P_\alpha=\nu$ and $P_\omega=\mu$. \\

\noindent 
\textit{Contribution.} The Schr\"odinger bridgehead approach, proposed in \cref{eq:SBPapproach}, separates the generalized rational inattention problem defined by \cref{contobjective} into an inner matching problem and an outer marginal-choice problem. The inner problem finds the Schr\"odinger Bridge between $\nu$ and $\mu$ for a fixed $\nu$, which we call the \textit{bridgehead}, while the outer problem finds the optimal bridgehead. The main contribution of this paper is to extend the rational inattention problem to general action spaces using entropic optimal transport. This is not to say that continuous action spaces are actually econometrically practical given finitely many observations; rather, this shows that it is valid to model an agent with a continuous action space as an agent with a finely discretized action space. \\

The economic content of the Schr\"odinger bridgehead approach is not obvious, but the mathematical motivation is clear. Since \textcite[hereafter, MM]{matvejka2015rational}, we have known that the \textit{optimal} information policy is an additive multinomial logit
\begin{align}\label{eq:MMMNL}
    P(\alpha|\omega)=\frac{e^{[u(\alpha,\omega)/\lambda]+\log P(\alpha)}}{Z(\omega;P_\alpha)} &&\text{where }Z(\omega;P_\alpha)=\sum_{\alpha \in A} e^{[u(\alpha,\omega)/\lambda]+\log P(\alpha)}
\end{align}
and that the optimal marginal $P_\alpha$ solves the Kuhn-Tucker condition
\begin{align}\label{eq:MMFOC}
    \int \frac{e^{u(\alpha,\omega)/\lambda}}{Z(\omega;P_\alpha)}\  d\mu(\omega)\leq 1
\end{align}
with equality almost everywhere. The struggle was always with the fact that \cref{eq:MMMNL} only provides a relationship between the \textit{optimal} conditional choice probability $P(\alpha|\omega)$ and the \textit{optimal} marginal $P_\alpha$, but the optimal marginal is somewhat mysterious: it typically lacks a closed form and is generically non-unique. \textcite[hereafter, CDL]{caplin2019rational} made significant progress in characterizing the supports of the optimal marginals, but hitherto thou shalt come. \\

In sum, we know that the challenging object in rational inattention is the marginal $P_\alpha$---once the optimal marginals are known, the optimal conditional choice probabilities $P(\alpha|\omega)$ follow from \cref{eq:MMMNL}. Thus, even if it is \textit{prima facie} economically nonsensical to ask what the optimal conditional choice probability is when the marginal is suboptimal---which is exactly what the SBP does---the marginal is nevertheless naturally an object of mathematical interest. The fact that the tools we use to study this object also allow us to extend rational inattention to continuous action spaces is an added bonus.\\

The main economic insight made clear by the Schr\"odinger bridgehead approach is that the  Lagrange multipliers (called the \textit{Schr\"odinger potentials}, and which can be infinite-dimensional) associated with the marginal constraints of the SBP are economically meaningful objects in their own right. The inner optimal transport problem emphasizes that the optimal conditional choice probability \textit{has to} eliminate residual matching surplus subject to the entropic cost. Once the bridge exhausts the matching gains, the payoff becomes additively separated into the expected value of state and action potentials, analogous to the Kantorovich potentials in vanilla optimal transport. Through additive separability, we can deduce that the action potential encodes the direction of greatest gain in $\Delta A$. The outer problem then highlights that these gains must \textit{also} be exhausted.   \\ 

More specifically, the action potential $a_\nu:A\to\R$ associated with the marginal $\nu$ describes the marginal benefit of re-assigning additional mass at the point $\alpha$ to $\nu$ from the perspective of the outer problem.\footnote{i.e. assuming that for a sub-optimal $\nu$, the agent nevertheless chooses the optimal joint---the Schr\"odinger bridge} Because the outer problem allows the agent to freely adjust $\nu$, the potential associated with the \textit{optimal} marginal must be zero on the consideration set and non-positive off support.\footnote{In continuous cases, this is actually a subtle point: we are accustomed in measure theory to treating things that are a.s. equal as equivalent, but such a custom may be unfounded in optimization contexts.} This provides a new interpretation for the necessary and sufficient first-order conditions outlined by CDL which characterize the optimal marginal.\\

On the other hand, the state potential $b_\nu:\Omega \to \R$ can encode the optimal conditional choice probability under the optimal marginal $\nu$. This is because the payoff is additively separated into the expected value of the state potential plus the action potential; but because the action potential is zero almost everywhere, we are left \textit{only} with the state potential, which coincides with the log of the normalizing factor in the conditional choice probability, seen in the denominator of \cref{eq:MMMNL}. The normalizing factor is called a \textit{partition function} in statistical mechanics. Just as the log of a partition function characterizes its statistical mechanical ensemble via its cumulant generating properties, $b_\nu$ characterizes properties of rationally inattentive choice. \\

Finally, we use the potentials to derive a connection between a computational tool for optimization in rational inattention---the Blahut-Arimoto algorithm---with the computational tool which re-ignited interest in entropic optimal transport---the Sinkhorn matrix-scaling algorithm. We show how the esoteric Blahut-Arimoto re-weighting process can be interpreted as an augmented matrix-scaling algorithm which improves the guess of the optimal marginal $\nu\in\Delta A$ after each step using the potential $a_\nu(\alpha)$ to guide the marginal in a more favourable direction. \\

\noindent 
\textit{Related literature.} Optimal transport has taken on an evangelical following in economics. This paper contributes to a growing literature of economic problems which \textcite{galichon2021unreasonable} refers to as \textit{optimal transport problems in disguise}. The Schr\"odinger Bridge problem was first studied by \textcite{schrodinger1931umkehrung}, who considered the most likely stochastic process connecting an initial distribution to a final distribution; however, its revival in the modern literature can be traced to  \textcite{cuturi2013sinkhorn}, who showed that computation of optimal transport problems could be improved by orders of magnitude by ``smoothing'' the problem using entropy-regularization. Much of the focus has centered on this use case, and particularly on asymptotic behaviour as the regularization term nears zero. To the best of our knowledge, computational efficiency has also been the only use case for entropic optimal transport in information economics \parencite{justiniano2025entropy}. This paper presents a non-asymptotic application: we cast the Shannon rational inattention model as an optimal Schr\"odinger bridgehead problem, and using results from entropic optimal transport, we reproduce and generalize the results of \textcite{matvejka2015rational,caplin2019rational}.

\section{The Schr\"odinger Bridge Problem}

We start with four important theorems in the theory of entropic optimal transport. Detailed proofs can be found in standard texts on the subject, such as \textcite{nutz2021introduction}.\footnote{One should warn for the sake of the interested reader that optimal transport problems are typically written with the opposite sign. $\sup$'s become $\inf$'s, utilities become costs, positives become negatives, \textit{etc.} }
\begin{theorem}
    For each $\nu$, there exists a unique Schr\"odinger bridge $P\in \Pi(\nu,\mu)$ that attains the supremum in \cref{eq:SBP}.  
\end{theorem}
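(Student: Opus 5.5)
The plan is to rewrite the objective in \cref{eq:SBP} as a single relative entropy and then use the direct method together with strict convexity. Fix $\nu$ and set $c:=\overline u/\lambda$. Define the finite positive measure $dR:=e^{u(\alpha,\omega)/\lambda}\,d(\nu\otimes\mu)$; its total mass satisfies $0<R(A\times\Omega)\le e^{c}$. Normalize it to the probability measure $\tilde R := R/R(A\times\Omega)$. For any $P\in\Pi(\nu,\mu)$ with $D_{KL}(P\|\nu\otimes\mu)<\infty$ we have $P\ll\nu\otimes\mu$, and a direct computation gives
\begin{align*}
\int \frac{u}{\lambda}\,dP - D_{KL}(P\|\nu\otimes\mu) = -D_{KL}(P\|\tilde R) + \log R(A\times\Omega).
\end{align*}
One has to be slightly careful about the $-\infty$ parts of $u$ when splitting the integral. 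Because $u\le\overline u$, the positive part of $u$ is bounded, so $\int u\,dP$ is well defined in $[-\infty,\overline u]$ and the identity holds in the extended reals. Maximizing the SBP objective is therefore the same as minimizing $D_{KL}(\cdot\|\tilde R)$ over $\Pi(\nu,\mu)$.

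The next step is to show the value is finite. Take the product coupling $P_0=\nu\otimes\mu$. Its value is $\int u/\lambda\,d(\nu\otimes\mu)$, which could be $-\infty$. If that happens, I would instead use $P_0=\tilde R$ restricted and renormalized on a set where $u$ is bounded below, or a similar construction. Honestly, finiteness may need an implicit integrability assumption, for instance $u\in L^1(\nu\otimes\mu)$, which is the standard hypothesis in \textcite{nutz2021introduction}. I would state this caveat explicitly and assume some $P\in\Pi(\nu,\mu)$ has finite objective.

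For existence, note that $\Pi(\nu,\mu)$ is tight, since $\nu$ and $\mu$ are each tight on Polish spaces, and it is weakly closed. By Prokhorov it is weakly compact. The functional $P\mapsto D_{KL}(P\|\tilde R)$ is weakly lower semicontinuous by the Donsker--Varadhan variational representation, since it is a supremum of continuous linear functionals $\int f\,dP-\log\int e^f d\tilde R$ over $f\in C_b$. A lower semicontinuous function on a compact set attains its minimum, and that minimum is finite by the previous step.

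For uniqueness, $D_{KL}(\cdot\|\tilde R)$ is strictly convex on the set where it is finite, because $x\log x$ is strictly convex. The set $\Pi(\nu,\mu)$ is convex. If two minimizers existed, their midpoint would lie in $\Pi(\nu,\mu)$ with strictly smaller divergence unless their densities agree $\tilde R$-a.e., which is a contradiction. The main obstacle is the finiteness and integrability issue with $u$ unbounded below, not compactness or convexity.
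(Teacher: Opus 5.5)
Your argument is correct and is essentially the standard proof the paper relies on. The paper gives no proof of its own and defers to Nutz's notes, which proceed exactly as you do: absorb $u$ into the Gibbs reference measure $\tilde R$, then combine Prokhorov compactness of $\Pi(\nu,\mu)$ with weak lower semicontinuity and strict convexity of $D_{KL}(\cdot\|\tilde R)$.

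Your finiteness caveat is genuinely needed, not a technicality. For example, take $u(\alpha,\omega)=-g(\alpha)$ with $g\geq 0$ finite-valued but $\int g\,d\nu=\infty$. Then every $P\in\Pi(\nu,\mu)$ has value $-\infty$, so the theorem as stated fails (uniqueness breaks unless $\mu$ is a point mass). So keep the hypothesis $V(\nu)>-\infty$. Drop the fallback of restricting and renormalizing $\tilde R$, since the resulting measure is no longer in $\Pi(\nu,\mu)$.
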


\begin{theorem}[Existence and uniqueness of potentials]\label{propertiesofEOT}
    Let $P$ be the Schr\"odinger bridge that solves \cref{eq:SBP}. There exist functions $a_\nu\in L^1(\nu), b_\nu\in L^1(\mu)$ such that
    \begin{align}\label{structure}
        \frac{dP}{d(\nu\otimes\mu)} =e^{\frac{u(\alpha,\omega)}{\lambda}-a_\nu(\alpha)-b_\nu(\omega)}
    \end{align}
    These are called the \textbf{Schr\"odinger potentials}. Conversely, if there exist potentials such that \cref{structure} holds, then $P$ must solve \cref{eq:SBP}. Moreover, potentials are unique up to translation.\footnote{That is, for any two pairs of potentials $(a_\nu,b_\nu)$ and $(a_\nu',b_\nu')$, we have $a_\nu-a'_\nu=b_\nu'-b_\nu$.  }
\end{theorem}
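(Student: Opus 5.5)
We would establish the three claims of \cref{propertiesofEOT} (existence of potentials, the converse, and uniqueness up to translation) separately, working throughout with the reference measure $R:=\nu\otimes\mu$ and the Gibbs kernel $K:=e^{u/\lambda}$. Since $u\le\overline u$, $K$ is bounded above and strictly positive wherever $u>-\infty$. The starting observation is the identity, valid for every $P\in\Pi(\nu,\mu)$ with $P\ll R$:
\begin{align*}
\int \frac{u}{\lambda}\,dP - D_{KL}(P\|R) = -D_{KL}(P\|\tilde R) + \log \int K\,dR ,\qquad d\tilde R := \frac{K\,dR}{\int K\,dR}.
\end{align*}
So \cref{eq:SBP} is equivalent to the static Schr\"odinger problem of minimizing $D_{KL}(P\|\tilde R)$ over $\Pi(\nu,\mu)$, i.e.\ an entropic (I-)projection of the probability measure $\tilde R$ onto the convex set $\Pi(\nu,\mu)$. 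The value is finite because $\nu\otimes\mu$ itself is admissible, provided $u\in L^1(R)$; I would assume this integrability, as in \textcite{nutz2021introduction}.

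\textbf{Converse.} This is the easy direction and I would do it first. Suppose $dP/dR=e^{u/\lambda-a-b}$ with $a\in L^1(\nu)$ and $b\in L^1(\mu)$. For any $Q\in\Pi(\nu,\mu)$ with finite objective, the same identity gives
\begin{align*}
\int \frac{u}{\lambda}\,dQ-D_{KL}(Q\|R)=\int a\,d\nu+\int b\,d\mu-D_{KL}(Q\|P).
\end{align*}
The key point is that $\int(a\oplus b)\,dQ=\int a\,d\nu+\int b\,d\mu$ because $Q$ has the marginals $\nu$ and $\mu$. The right-hand side is maximized exactly at $Q=P$, since $D_{KL}\ge0$ with equality iff $Q=P$. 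This also recovers the uniqueness statement of the first theorem.

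\textbf{Existence.} This is the main obstacle. I would use Csisz\'ar's characterization of I-projections onto sets defined by linear (marginal) constraints: the projection $P$ has density $dP/d\tilde R=e^{f(\alpha)+g(\omega)}$, with $f,g$ measurable but a priori only $P$-a.s.\ defined and not obviously integrable. The alternative route is to run Sinkhorn/IPFP iterations and extract a limit of the potentials. In either case the delicate steps are:
\begin{itemize}
\item showing $P\sim R$, not merely $P\ll R$, which I would get by perturbing $P$ toward $\nu\otimes\mu$ and using that the derivative of the entropy at $0$ is $-\infty$;
\item showing $a_\nu\in L^1(\nu)$ and $b_\nu\in L^1(\mu)$.
\end{itemize}
For integrability, the plan is to use the Schr\"odinger equations
\begin{align*}
a_\nu(\alpha)=\log\int e^{u/\lambda-b_\nu}\,d\mu,\qquad b_\nu(\omega)=\log\int e^{u/\lambda-a_\nu}\,d\nu,
\end{align*}
which follow from the marginal constraints. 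Combining these with Jensen's inequality and the bound $u\le \overline u$ gives upper bounds on the positive parts of $a_\nu,b_\nu$. Finiteness of $D_{KL}(P\|R)=\int(u/\lambda-a_\nu-b_\nu)\,dP$ then controls the negative parts. If that fails, I would fall back on citing \textcite{nutz2021introduction}.

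\textbf{Uniqueness up to translation.} If two pairs $(a,b)$ and $(a',b')$ both represent $dP/dR$, then $a(\alpha)+b(\omega)=a'(\alpha)+b'(\omega)$ for $R$-a.e.\ $(\alpha,\omega)$. Since $R$ is a product measure, Fubini lets us fix a good $\omega_0$, which shows that $a-a'$ is $\nu$-a.s.\ equal to a constant $c$. Then $b'-b=c$ $\mu$-a.s., which is exactly the claimed relation $a_\nu-a'_\nu=b'_\nu-b_\nu$.
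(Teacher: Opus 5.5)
\textbf{Comparison with the paper.} The paper does not prove this theorem. It lists it among four standard facts of entropic optimal transport and refers the reader to \textcite{nutz2021introduction}. Your proposal therefore differs mainly in that it actually supplies arguments.

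Your converse is complete and is the standard proof. It rests on the identity $\int \frac{u}{\lambda}\,dQ-D_{KL}(Q\|\nu\otimes\mu)=\int a\,d\nu+\int b\,d\mu-D_{KL}(Q\|P)$ for $Q\in\Pi(\nu,\mu)$. It uses two facts: the density is strictly positive, so $P\sim\nu\otimes\mu$ and the chain rule for densities applies; and $a\oplus b$ integrates identically against every coupling. The same identity also gives uniqueness of the bridge and, with $Q=P$, \cref{cor:addsep}. Your Fubini argument for uniqueness up to translation is correct.

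On existence you end up where the paper is (Csisz\'ar or Sinkhorn, ultimately the citation), with the right perturbation argument for $P\sim\nu\otimes\mu$. Your explicit hypothesis $u\in L^1(\nu\otimes\mu)$ is a genuine improvement on the statement. With only $u\le\overline u$ one can have $V(\nu)=-\infty$: take $u(\alpha,\omega)=-h(\alpha)$ with $h\ge 0$ and $h\notin L^1(\nu)$. Then every coupling attains the supremum, so neither uniqueness nor integrable potentials survive.

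\textbf{A sign slip in your integrability sketch.} One step has the signs reversed for the paper's convention. From $\int e^{u(\alpha,\omega)/\lambda-a(\alpha)-b(\omega)}\,d\mu(\omega)=1$ and $x^+\le e^x$ (or Jensen), you get $\int \bigl(\frac{u(\alpha,\cdot)}{\lambda}-a(\alpha)-b\bigr)^+\,d\mu\le 1$. Choosing $\alpha$ with $a(\alpha)$ finite and $u(\alpha,\cdot)\in L^1(\mu)$ then gives $b^-\in L^1(\mu)$; symmetrically $a^-\in L^1(\nu)$. So the Schr\"odinger equations control the \emph{negative} parts. What they use is integrability of $u^-$, not $u\le\overline u$. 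An upper bound on $a_\nu$ from $u\le\overline u$ would require $\int e^{-b}\,d\mu<\infty$, which you do not have.

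The positive parts then follow from $a\oplus b=\frac{u}{\lambda}-\log\frac{dP}{d(\nu\otimes\mu)}\in L^1(P)$. This holds by finite entropy together with $\int u\,dP\in(-\infty,\overline u]$, which is where $u\le\overline u$ and optimality against $\nu\otimes\mu$ enter. Integrating $a^+(\alpha)\le\bigl(a(\alpha)+b(\omega)\bigr)^++b^-(\omega)$ against $P$ finishes it. Your wording about positive and negative parts matches the cost-sign convention $f=-a_\nu$, $g=-b_\nu$, so this is likely a transcription slip; with the roles swapped (and $u^-$-integrability used in the Jensen step), the sketch closes.
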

\begin{corollary}[Additive separability]\label{cor:addsep}
    $\dis 
    V(\nu)=\int a_\nu(\alpha)\ d\nu(\alpha)+\int b_\nu(\omega)\ d\mu(\omega)$
\end{corollary}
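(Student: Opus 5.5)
The plan is to evaluate the objective of $V(\nu)$ at the Schr\"odinger bridge $P$ from the first theorem, substituting the density from \cref{structure}. Since $P\in\Pi(\nu,\mu)$ attains the supremum in \cref{eq:SBP}, we have
\begin{align*}
V(\nu)=\int \frac{u}{\lambda}\, dP-\int \log\pare{\frac{dP}{d(\nu\otimes\mu)}}\, dP .
\end{align*}
Here $D_{KL}(P\|\nu\otimes\mu)=\int \log\frac{dP}{d(\nu\otimes\mu)}\,dP$, which is finite because $V(\nu)>-\infty$: the independent coupling $\nu\otimes\mu$ already gives a value $\geq \int u/\lambda\, d(\nu\otimes\mu)$, assuming that quantity is not $-\infty$. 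By \cref{structure}, $\log\frac{dP}{d(\nu\otimes\mu)}=\frac{u(\alpha,\omega)}{\lambda}-a_\nu(\alpha)-b_\nu(\omega)$ holds $\nu\otimes\mu$-a.e., hence $P$-a.e. since $P\ll\nu\otimes\mu$. The $u/\lambda$ terms should cancel, leaving
\begin{align*}
V(\nu)=\int \pare{a_\nu(\alpha)+b_\nu(\omega)}\, dP .
\end{align*}

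Next I will use the marginal constraints. Since $a_\nu$ depends only on $\alpha$ and $P_\alpha=\nu$, we get $\int a_\nu\, dP=\int a_\nu\, d\nu$, and likewise $\int b_\nu\, dP=\int b_\nu\, d\mu$ because $P_\omega=\mu$. Both integrals are finite because $a_\nu\in L^1(\nu)$ and $b_\nu\in L^1(\mu)$ by \cref{propertiesofEOT}, and the pushforward identity $\int f(\alpha)\,dP=\int f\,d\nu$ holds for every $f\in L^1(\nu)$. This gives the claimed additive separation. It is also invariant under the translation ambiguity $(a_\nu+c,\,b_\nu-c)$, which is a useful consistency check.

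The main obstacle is justifying the cancellation in the first step: splitting $\int(u/\lambda - \log \frac{dP}{d(\nu\otimes\mu)})\,dP$ into separate integrals requires that $\int u\,dP$ and the KL term are not both infinite. I would handle this by working with the combined integrand directly. Pointwise $P$-a.e., $u/\lambda-\log\frac{dP}{d(\nu\otimes\mu)}=a_\nu+b_\nu$, and the right side lies in $L^1(P)$ by the previous paragraph. So the objective, read as the integral of this combined integrand, equals $\int(a_\nu+b_\nu)\,dP$ without ever separating $u$ from the entropy. Since $u\le\overline u$, $\int u\,dP$ is well defined in $[-\infty,\overline u/\lambda]$, and the KL term is $\geq 0$. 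Finiteness of the combined integral then forces each piece to be finite, so the decomposition is legitimate.
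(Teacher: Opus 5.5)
Your proposal is correct and follows the route the paper intends when it states this as an immediate corollary of the existence-of-potentials theorem. You evaluate the objective at the Schr\"odinger bridge, substitute $\log\frac{dP}{d(\nu\otimes\mu)}=\frac{u}{\lambda}-a_\nu-b_\nu$, and integrate $a_\nu$ and $b_\nu$ against the marginals $P_\alpha=\nu$ and $P_\omega=\mu$. Your last paragraph already secures finiteness by working with the combined integrand, using $u\le\overline u$, $D_{KL}\ge 0$ and $a_\nu+b_\nu\in L^1(P)$. So you can drop the earlier appeal to the independent coupling, with its extra assumption $\int u\,d(\nu\otimes\mu)>-\infty$.
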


\begin{theorem}[Schr\"odinger Equations]\label{Schr\"odingereq}
        $a_\nu$ and $b_\nu$ constitute a pair of Schr\"odinger potentials if and only if he satisfy
    \begin{align*}
        e^{a_\nu(\alpha)}= \int e^{\frac{u(\alpha,\omega)}{\lambda}-b_\nu(\omega)}\:d\mu(\omega) && e^{b_\nu(\omega)}=\int e^{\frac{u(\alpha,\omega)}{\lambda}-a_\nu(\alpha)}\: d\nu(\alpha)
    \end{align*}
\end{theorem}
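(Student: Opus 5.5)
The plan is to derive the Schr\"odinger equations directly from \cref{structure} via the marginal constraints, and to handle the converse by building a coupling from the potentials. Set $f(\alpha,\omega):=e^{u(\alpha,\omega)/\lambda-a_\nu(\alpha)-b_\nu(\omega)}$.

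\emph{Forward direction.} Suppose $(a_\nu,b_\nu)$ are Schr\"odinger potentials, so $dP=f\,d(\nu\otimes\mu)$ for the Schr\"odinger bridge $P\in\Pi(\nu,\mu)$ of \cref{propertiesofEOT}. The constraint $P_\alpha=\nu$ means that for every bounded measurable $g$ on $A$,
\begin{align*}
\int g(\alpha)\,d\nu(\alpha)=\int\!\!\int g(\alpha) f(\alpha,\omega)\,d\mu(\omega)\,d\nu(\alpha).
\end{align*}
Tonelli applies since $f\ge 0$. It follows that $\int f(\alpha,\omega)\,d\mu(\omega)=1$ for $\nu$-a.e.\ $\alpha$. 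Multiplying by $e^{a_\nu(\alpha)}$, which does not depend on $\omega$, gives the first equation $\nu$-a.e. The constraint $P_\omega=\mu$ gives the second equation $\mu$-a.e.\ in the same way.

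\emph{Upgrading to everywhere.} The statement is pointwise, and the footnote on a.s.\ equality suggests pointwise values matter. The cleanest fix is to use the translation/version freedom in the potentials. Define $\tilde a_\nu(\alpha):=\log\int e^{u/\lambda-b_\nu}\,d\mu$ for \emph{every} $\alpha$, and then $\tilde b_\nu(\omega):=\log\int e^{u/\lambda-\tilde a_\nu}\,d\nu$ for every $\omega$. By the previous step, $\tilde a_\nu=a_\nu$ $\nu$-a.e. Hence $\tilde b_\nu$ also agrees with $b_\nu$ $\mu$-a.e.; this matters because the integral defining $\tilde b_\nu$ only sees $\nu$-a.e.\ values of $\tilde a_\nu$. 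This gives a version of the potentials satisfying both equations. I expect this version-selection to be the main subtlety. We also need $\int e^{u/\lambda-b_\nu}\,d\mu<\infty$ for every $\alpha$. This should follow from $u\le\overline u$ once we know $e^{-b_\nu}\in L^1(\mu)$. Since $e^{b_\nu(\omega)}\ge e^{\underline c}$ fails in general, I would first try to show boundedness below of $b_\nu$ from the second equation via Jensen: $b_\nu(\omega)\ge\int(u/\lambda-a_\nu)\,d\nu$. This requires some integrability of $u(\cdot,\omega)$; if it is unavailable, I would interpret the equations in $(-\infty,\infty]$ off-support.

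\emph{Converse.} Suppose $a_\nu,b_\nu$ satisfy both equations, with $a_\nu\in L^1(\nu)$ and $b_\nu\in L^1(\mu)$ as in the setting of \cref{propertiesofEOT}. Define $dP:=f\,d(\nu\otimes\mu)$. Integrating $f$ over $\omega$ and using the first equation gives $e^{-a_\nu(\alpha)}e^{a_\nu(\alpha)}=1$, so $P_\alpha=\nu$. Symmetrically, the second equation gives $P_\omega=\mu$. In particular $P$ is a probability measure in $\Pi(\nu,\mu)$ whose density has exactly the form \cref{structure}. By the converse part of \cref{propertiesofEOT}, $P$ is the Schr\"odinger bridge and $(a_\nu,b_\nu)$ are its potentials, which completes the equivalence.
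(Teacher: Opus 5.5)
Your proposal is correct and is the standard argument the paper defers to (Nutz's notes). Given the product-form density, Tonelli makes the two marginal constraints equivalent to the Schr\"odinger equations holding $\nu$-a.e.\ and $\mu$-a.e.; redefining $a_\nu$ and then $b_\nu$ by the right-hand sides gives a version satisfying them everywhere, and the converse builds $P$ and invokes the converse half of \cref{propertiesofEOT}. The one step to drop is the Jensen attempt at $e^{-b_\nu}\in L^1(\mu)$: it cannot work because $u$ is unbounded below. For example, take $\nu=\delta_{\alpha_0}$, $\mu$ uniform on $(0,1)$, $u(\alpha_0,\omega)=\lambda\log\omega$ and $u(\alpha_1,\omega)=0$; this forces $a_\nu(\alpha_1)=+\infty$, so your fallback of reading the equations in $(-\infty,\infty]$ off the support is exactly right.
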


\begin{theorem}[Duality] $\dis 
        V(\nu)=\inf_{a\in L^1(\nu), \ b\in L^1(\mu)}\int a\:d\nu + \int b\:d\mu +\iint e^{\frac{u}{\lambda}-a-b}\:d\nu\:d\mu - 1$
\end{theorem}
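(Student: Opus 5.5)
The plan is to prove the duality formula by weak duality followed by attainment at the Schrödinger potentials. Write $\Phi(a,b):=\int a\,d\nu+\int b\,d\mu+\iint e^{u/\lambda-a-b}\,d\nu\,d\mu-1$ for the dual objective.

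\emph{Step 1: weak duality.} Fix any $P\in\Pi(\nu,\mu)$ with $D_{KL}(P\|\nu\otimes\mu)<\infty$, with density $f=dP/d(\nu\otimes\mu)$, and fix any $a\in L^1(\nu)$, $b\in L^1(\mu)$. Set $g:=e^{u/\lambda-a-b}$. The pointwise Fenchel--Young inequality $xy\le x\log x - x + e^{y}$ (for $x\ge 0$), applied with $x=f$ and $y=u/\lambda-a-b$, gives
\begin{align*}
f\Big(\frac{u}{\lambda}-a-b\Big)-f\log f\le g-f.
\end{align*}
Integrate against $\nu\otimes\mu$ and use the marginal constraints $\int a\,dP=\int a\,d\nu$ and $\int b\,dP=\int b\,d\mu$, together with $\int f\,d(\nu\otimes\mu)=1$. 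This yields
\begin{align*}
\int \frac{u}{\lambda}\,dP-D_{KL}(P\|\nu\otimes\mu)\le \Phi(a,b).
\end{align*}
Taking the supremum over $P$ and the infimum over $(a,b)$ gives $V(\nu)\le\inf\Phi$.

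The main obstacle is integrability. Since $u\le\overline u$, the positive part of $\int u\,dP$ is finite. The terms $a(\alpha)$ and $b(\omega)$ are integrable against $P$ because $P$ has marginals $\nu$ and $\mu$. So each term is either finite or $-\infty$ on the left-hand side, and the inequality survives integration: the pointwise inequality lets me split into positive and negative parts, and integration is legitimate since the right-hand side is bounded below by $-f$, which is integrable. If $\iint g=\infty$, the bound holds trivially.

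\emph{Step 2: attainment.} Take the Schrödinger bridge $P$ and its potentials $(a_\nu,b_\nu)$ from \cref{propertiesofEOT}. Then $g=f$, so $\iint e^{u/\lambda-a_\nu-b_\nu}\,d\nu\,d\mu=P(A\times\Omega)=1$, and $\Phi(a_\nu,b_\nu)=\int a_\nu\,d\nu+\int b_\nu\,d\mu$. By \cref{cor:addsep} this equals $V(\nu)$. Alternatively, equality holds directly in the Fenchel--Young step, since equality occurs exactly when $x=e^{y}$. Hence the infimum is attained and equals $V(\nu)$.

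If \cref{cor:addsep} is not taken as given, I would verify it by substituting $\log f=u/\lambda-a_\nu-b_\nu$ into $\int(u/\lambda)\,dP-\int\log f\,dP$ and using the marginal constraints; this requires $a_\nu\in L^1(\nu)$ and $b_\nu\in L^1(\mu)$, which \cref{propertiesofEOT} provides.
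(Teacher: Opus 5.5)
Your proposal is correct: the pointwise Fenchel--Young inequality $xy\le x\log x-x+e^{y}$, integrated using the marginal constraints, gives weak duality, and your integrability bookkeeping is sound ($u$ bounded above, finite relative entropy, $a,b$ integrable against $P$, right-hand side bounded below by minus the density); equality in that inequality at the Schr\"odinger potentials of \cref{propertiesofEOT} then gives attainment. The paper gives no proof of its own and defers to standard references such as \textcite{nutz2021introduction}, and your weak-duality-plus-attainment argument is essentially the standard one there, with the only substantive input being the integrability $a_\nu\in L^1(\nu)$, $b_\nu\in L^1(\mu)$ asserted in \cref{propertiesofEOT}, which you are entitled to take as given.
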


Some care must be taken when considering the distinction between a function $g$ and its associated equivalence class of functions $h$ for which $g=h$ a.e. When considering the optimality of the choice $P$ in a rational inattention problem, one must not only demonstrate optimality of choices on the support of $P$, but also that choices on the null set of $P$ are not strictly better than the choices in the support. As such, we need to proceed with caution; for instance, satisfying a necessary condition ``almost everywhere'' may not be good enough. An associated problem is that `choice' of $P$ does not precisely pin down its disintegration kernels or associated densities. Though not fatal, the ambiguity of possibly infinite densities clutters analysis with qualifiers. To simplify things, we restrict the agent to choosing versions of disintegrations and densities which satisfy
\begin{align}\label{densityregularity}
    \frac{dP(\alpha,\omega)}{d(P_\alpha\otimes \mu )(\alpha,\omega)}=\frac{dP(\alpha|\omega)}{dP(\alpha)}=\frac{dP(\omega|\alpha)}{d\mu(\omega)}<\infty && \frac{dP(\omega)}{d\mu(\omega)}=1
\end{align}
\textit{everywhere}, not just almost everywhere. This is without loss, since the definition of objective forces different versions to yield the same payoffs.

\section{Rational Inattention in Polish Spaces}

\subsection{Some Basic Existence Results}

The first question to ask is whether there even exists an information policy which maximizes the objective \cref{contobjective}. As usual, compactness of $A$ and upper semicontinuity of $u$ suffices.\\

Define
\begin{align*}
        f(\nu)=\int \log\pare{\int e^{u(\alpha,\omega)/\lambda}\ d\nu(\alpha)}\ d\mu(\omega)
\end{align*}
An important result in MM is that maximizing the discrete version of $f$ in tantamount to maximizing the discrete version of $V$. We will generalize this result in this paper. For now, let us establish the circumstances under which $f$ is maximized.
\begin{lemma}
    If $u(\cdot, \omega)$ is upper semicontinuous $\mu$-a.s. then $f$ is upper semicontinuous.
\end{lemma}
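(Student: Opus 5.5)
Upper semicontinuity of $f$ is meant with respect to weak convergence on $\Delta A$. The plan is to prove it in two layers: first for the inner map $\nu\mapsto g_\omega(\nu):=\int e^{u(\alpha,\omega)/\lambda}\,d\nu(\alpha)$ at each fixed $\omega$, then pass the $\limsup$ through $\log$ and the outer $\mu$-integral using Fatou's lemma in its reverse form. Fix a sequence $\nu_n\to\nu$ weakly. Since $A$ is Polish, the topology on $\Delta A$ is metrizable, so sequential upper semicontinuity is enough.

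\emph{Inner step.} Fix $\omega$ outside the $\mu$-null set where $u(\cdot,\omega)$ fails to be upper semicontinuous. Then $h_\omega(\alpha):=e^{u(\alpha,\omega)/\lambda}$ is upper semicontinuous, since it is the composition of a usc function with the continuous increasing map $x\mapsto e^{x/\lambda}$. It is also bounded above by $e^{\overline u/\lambda}$ and nonnegative. The portmanteau-type fact says that for a usc function bounded above, $\limsup_n\int h\,d\nu_n\le\int h\,d\nu$. I would prove this rather than cite it: write $h$ as the pointwise decreasing limit of bounded Lipschitz functions $h_k$ via the inf-convolution $h_k(\alpha)=\sup_{\beta}\{h(\beta)-k\,d(\alpha,\beta)\}$. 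Then
\[
\limsup_n\int h\,d\nu_n\le\lim_n\int h_k\,d\nu_n=\int h_k\,d\nu ,
\]
and monotone convergence (downward, with bounded $h_k$) gives $\int h_k\,d\nu\downarrow\int h\,d\nu$. So $g_\omega$ is usc at $\nu$. Since $\log$ is continuous and increasing on $[0,\infty)$ with $\log 0=-\infty$, it follows that $\limsup_n\log g_\omega(\nu_n)\le\log g_\omega(\nu)$ for $\mu$-a.e. $\omega$.

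\emph{Outer step.} Each integrand satisfies $\log g_\omega(\nu_n)\le\overline u/\lambda$, a $\mu$-integrable constant because $\mu$ is a probability measure. Reverse Fatou, applied to $\overline u/\lambda-\log g_\omega(\nu_n)\ge0$, then gives
\[
\limsup_n f(\nu_n)\le\int\limsup_n\log g_\omega(\nu_n)\,d\mu(\omega)\le\int\log g_\omega(\nu)\,d\mu(\omega)=f(\nu).
\]
This needs a convention that the integrals may equal $-\infty$, which is harmless since $f$ takes values in $[-\infty,\overline u/\lambda]$. Measurability of $\omega\mapsto g_\omega(\nu)$ follows from Tonelli, because $u$ is jointly measurable.

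\emph{Main obstacle.} The delicate point is the inner step when $u$ takes the value $-\infty$, or is unbounded below, so that $h_\omega$ can vanish. One must check that the Lipschitz approximation still decreases to $h_\omega$; it does, because $h_\omega$ is bounded and usc. One must also check that $f$ may equal $-\infty$ without breaking the Fatou argument; this holds because the only bound Fatou needs is the upper bound $\overline u/\lambda$. No lower bound on $u$ is required anywhere.
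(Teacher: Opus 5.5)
Your proof is correct and follows the paper's own argument. In both, $e^{u(\cdot,\omega)/\lambda}$ is upper semicontinuous and bounded in $[0,e^{\overline u/\lambda}]$, which gives the portmanteau inequality $\limsup_n\int e^{u/\lambda}\,d\nu_n\le\int e^{u/\lambda}\,d\nu$ for $\mu$-a.e.\ $\omega$, and reverse Fatou with the constant bound $\overline u/\lambda$ then carries this through the logarithm and the $\mu$-integral. Your extras (proving the portmanteau step by Lipschitz inf-convolution, noting that $\Delta A$ is metrizable, checking measurability via Tonelli, and allowing $f=-\infty$) only make explicit what the paper leaves implicit.
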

\begin{proof}
    $u\mapsto e^{u/\lambda}$ is continuous and non-decreasing, and thus $e^{u(\alpha,\omega)/\lambda}$ is upper semicontinuous, in addition to being lower bounded by $0$ and upper bounded by $e^{\overline u/\lambda}$, by assumption. By the Portmanteau theorem, if $\nu_n\to \nu$ weakly then
    \begin{align*}
        &\limsup_{n\to\infty}\int e^{u(\alpha,\omega)/\lambda}\ d\nu_n\leq \int e^{u(\alpha,\omega)/\lambda}\ d\nu\ ; \mu\text{-a.s.}
    \end{align*}
    By Fatou's lemma,
    \begin{align*}
        &\limsup_{n\to\infty} f(\nu_n) =\limsup_{n\to\infty}\int \log\pare{\int e^{u(\alpha,\omega)/\lambda}\ d\nu_n}\ d\mu(\omega)\leq\int \log\pare{\limsup_{n\to\infty}\int e^{u(\alpha,\omega)/\lambda}\ d\nu_n}\ d\mu(\omega) \leq f(\nu)
    \end{align*}
    
\end{proof}

\begin{theorem}
    If $A$ is compact and $f$ is upper semicontinuous, then $f$ attains its supremum in $\Delta A$.
\end{theorem}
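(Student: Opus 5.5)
The plan is the standard Weierstrass argument for upper semicontinuous functions on compact spaces, applied to $\Delta A$ with the topology of weak convergence. There are three steps: show $\Delta A$ is compact and metrizable, check that the supremum of $f$ is finite, and then pass to the limit along a maximizing sequence.

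\emph{Compactness of $\Delta A$.} Since $A$ is a compact Polish space, every family of probability measures on $A$ is tight, because $A$ itself is the compact set witnessing tightness. By Prokhorov's theorem, $\Delta A$ is therefore relatively compact in the weak topology. It is also closed, since a weak limit of probability measures on $A$ is again a probability measure on $A$. So $\Delta A$ is compact. Because $A$ is Polish, $\Delta A$ is metrizable (for example by the L\'evy--Prokhorov metric), so sequential arguments are enough.

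\emph{Finite supremum.} From the uniform bound $u\le \overline u$ we get $\int e^{u/\lambda}\,d\nu\le e^{\overline u/\lambda}$, and hence $f(\nu)\le \overline u/\lambda$ for every $\nu$. Thus $s:=\sup_{\Delta A} f<+\infty$. If $s=-\infty$, every $\nu$ attains the supremum and there is nothing to prove. So assume $s>-\infty$.

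\emph{Limit of a maximizing sequence.} Choose $\nu_n\in\Delta A$ with $f(\nu_n)\to s$. By compactness, some subsequence $\nu_{n_k}$ converges weakly to a limit $\nu^*\in\Delta A$. The lemma just proved, that $f$ is upper semicontinuous along weakly convergent sequences, then gives
\begin{align*}
s=\lim_{k\to\infty} f(\nu_{n_k})\le \limsup_{k\to\infty} f(\nu_{n_k})\le f(\nu^*)\le s,
\end{align*}
so $f(\nu^*)=s$ and $\nu^*$ is a maximizer.

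The main subtlety is the first step, namely confirming that the weak topology on $\Delta A$ is the one for which the preceding lemma gives upper semicontinuity, and that sequential compactness suffices. The lemma is phrased in terms of sequences $\nu_n\to\nu$, and $\Delta A$ is metrizable, so sequential compactness and sequential upper semicontinuity are exactly what the argument uses. A secondary point is that $f$ may take the value $-\infty$ (when $\int e^{u/\lambda}d\nu=0$ on a set of positive $\mu$-measure). This does no harm: the argument only needs $f$ to be $[-\infty,\infty)$-valued and bounded above, which the second step establishes.
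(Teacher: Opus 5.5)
Your proposal is correct and follows the paper's proof, which simply notes that $\Delta A$ is weakly compact when $A$ is compact and that an upper semicontinuous function on a compact set attains its maximum. You fill in details the paper leaves implicit: Prokhorov's theorem, metrizability of $\Delta A$ (so sequential arguments suffice), and the bound $f\le \overline u/\lambda$ ruling out an infinite supremum; all of these are sound.
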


\begin{proof}
    If $A$ is compact, then the space of probability measures over $A$ is compact in the weak topology. An upper semicontinuous function over a compact set is maximized.
\end{proof}

In the succeeding sections, we prove many theorems which are conditional on $V$ or $f$ attaining its maximum. The preceding results are sufficient, but not necessary.

\subsection{Optimal Information Policies}
With the entropic optimal transport tools in place, we can reproduce and extend the foundational results in the rational inattention literature. The reader will note that some of these results are almost trivial implications of the properties of the Schr\"odinger Bridge.\\

However, we begin with a peculiar result: if we write the objective as a single integral, then under the optimal information policy, the integrand is equal to a constant everywhere except on a null set, \textit{where it is less than the constant}. This unusual version of Gibbs' variational principle highlights the measure-theoretic caution that is sometimes needed in continuous rational inattention.

\begin{definition}
    A function $g:X\to \R$ is a plateau with respect to the measure $m$ if
    \begin{align*}
        m(\{x\in X:g(x)=\text{ess sup } g\})=m(X)
    \end{align*}
    and
    \begin{align*}
        g(x)\leq \text{ess sup }g && \forall \ x \in X
    \end{align*}
    That is, $g$ is equal to some constant $C$, $m$-a.e., and is less than $C$ everywhere.
\end{definition}

\begin{theorem}\label{gibbsproperty}
    If $P$ maximizes the objective \cref{contobjective} then for $\mu$-a.e. $\omega$,
    \begin{align*}
        \alpha\mapsto \frac{u(\alpha,\omega)}{\lambda}-\log\pare{\frac{dP}{d(P_\alpha\otimes \mu)}(\alpha,\omega)}
    \end{align*}
    is a $P(\cdot|\omega)$-plateau, $\mu$-a.s.
\end{theorem}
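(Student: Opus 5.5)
Let $\nu:=P_\alpha$. Since $P$ maximizes \cref{contobjective}, it in particular maximizes the inner problem in \cref{eq:SBPapproach} over $\Pi(\nu,\mu)$. So $P$ is the Schr\"odinger bridge between $\nu$ and $\mu$, and Theorem~\ref{propertiesofEOT} supplies potentials $(a_\nu,b_\nu)$ with
\begin{align*}
\frac{u(\alpha,\omega)}{\lambda}-\log\pare{\frac{dP}{d(\nu\otimes\mu)}(\alpha,\omega)} = a_\nu(\alpha)+b_\nu(\omega).
\end{align*}
This holds $(\nu\otimes\mu)$-a.e., and by the regularity convention \cref{densityregularity} we may take it to hold everywhere, with the density a version of $dP(\alpha|\omega)/dP(\alpha)$. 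For fixed $\omega$, the function in the statement is therefore $\alpha\mapsto a_\nu(\alpha)+b_\nu(\omega)$. Showing it is a $P(\cdot|\omega)$-plateau thus reduces to a statement about $a_\nu$ alone: after translating the potentials, we need $a_\nu=0$ for $P(\cdot|\omega)$-a.e.\ $\alpha$ and $a_\nu\le 0$ everywhere.

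The key step is the first-order condition of the outer problem. First I will show $a_\nu=c$ $\nu$-a.e.\ for a constant $c$, which we normalize to $0$ by translation. Then I will show $a_\nu(\alpha_0)\le 0$ at every point $\alpha_0\in A$, including points off the support of $\nu$. For both I perturb along $\nu_t=(1-t)\nu+t\eta$ with $\eta\in\Delta A$. By the Duality theorem, $V(\nu_t)$ is at most the dual objective evaluated at any fixed candidate pair. I will take the pair $(a_\nu,b_\nu)$, extended to all of $A$ by the Schr\"odinger equation (Theorem~\ref{Schr\"odingereq}), $e^{a_\nu(\alpha)}=\int e^{u/\lambda-b_\nu}\,d\mu$. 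That choice makes $\int e^{u/\lambda-a_\nu-b_\nu}d\mu=1$ for every $\alpha$. Hence the dual objective at $\nu_t$ equals $\int a_\nu\,d\nu_t+\int b_\nu\,d\mu$, which only bounds $V(\nu_t)$ from above. That is the wrong direction for a first-order condition.

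So for a lower bound I will instead build a feasible primal coupling at $\nu_t$ explicitly. Take $P_t=(1-t)P+t\,Q$, where $Q$ is the bridge-type coupling $dQ=e^{u/\lambda-a_\nu-b_\nu}d(\eta\otimes\mu)$. Its $\omega$-marginal is not exactly $\mu$, so the construction needs a correction. The more robust route is the classical one: differentiate the concave map $t\mapsto V(\nu_t)$ at $t=0^+$ via a Danskin/envelope argument, giving
\begin{align*}
\frac{d}{dt}V(\nu_t)\Big|_{0^+}=\int a_\nu\,d(\eta-\nu).
\end{align*}
Optimality of $\nu$ forces this to be $\le 0$ for all $\eta$. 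Taking $\eta=\delta_{\alpha_0}$ gives $a_\nu(\alpha_0)\le\int a_\nu\,d\nu$ for every $\alpha_0$, which is the pointwise inequality. Integrating this against $\nu$ forces $a_\nu=\int a_\nu\,d\nu$ $\nu$-a.e., which is the constant part. Finally, $P(\cdot|\omega)\ll\nu$ for $\mu$-a.e.\ $\omega$, because $P\ll\nu\otimes\mu$. So the constant holds $P(\cdot|\omega)$-a.e., and the essential supremum under $P(\cdot|\omega)$ equals that constant plus $b_\nu(\omega)$.

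The main obstacle is justifying the one-sided derivative, especially for $\eta=\delta_{\alpha_0}$ with $\alpha_0$ outside $\operatorname{supp}\nu$, where the potentials of $\nu_t$ need not converge nicely. My first attempt avoids the derivative of $V$ entirely. I will work directly with the full objective \cref{contobjective} and the explicit perturbation $P_t(d\alpha|\omega)=(1-t)P(d\alpha|\omega)+t\,\delta_{\alpha_0}$, which keeps the $\omega$-marginal equal to $\mu$. I then compute the right derivative of the objective, using convexity of mutual information in the channel to control the entropy term. The derivative should come out as $\int\big(e^{u(\alpha_0,\omega)/\lambda-b_\nu(\omega)}-1\big)\,d\mu\cdot(\text{const})$, which is of the MM/CDL Kuhn--Tucker form \cref{eq:MMFOC}. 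Combined with the Schr\"odinger equation, requiring it to be $\le0$ gives $e^{a_\nu(\alpha_0)}\le 1$. The everywhere version of \cref{densityregularity} is what makes this evaluation at a single point $\alpha_0$ meaningful.
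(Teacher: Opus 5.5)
Your reduction is correct and takes a different route from the paper's. Since $P$ is Bayes-plausible with $P_\alpha=\nu$, it is the Schr\"odinger bridge for $(\nu,\mu)$. Once you fix the off-support version of the density by the Schr\"odinger-equation extension of $a_\nu$ (that choice, not the regularity convention, is what gives the ``everywhere'' clause a meaning), the function is $a_\nu(\alpha)+b_\nu(\omega)$. So the plateau property is exactly: $a_\nu=c$ $\nu$-a.e.\ and $a_\nu\le c$ everywhere. The paper never uses potentials here; it perturbs the full objective toward $H=H(\cdot|\omega)\otimes\mu$ concentrated on above-average points.

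The gap is the pointwise inequality at points with $\nu(\{\alpha_0\})=0$, which is the whole content of the ``everywhere'' half. The Danskin step is, as you say, unjustified there, and the fallback you commit to does not give what you claim. For $P_t(d\alpha|\omega)=(1-t)P(d\alpha|\omega)+t\,\delta_{\alpha_0}$, the added piece $\delta_{\alpha_0}\otimes\mu$ is singular to $P$ and carries no information. Hence $I(P_t)=(1-t)I(P)$ exactly, and the objective is affine in $t$ with slope $\int\bigl(u(\alpha_0,\cdot)/\lambda-b_\nu\bigr)\,d\mu-\int a_\nu\,d\nu$. By Jensen this is at most $a_\nu(\alpha_0)-\int a_\nu\,d\nu$, generally strictly, so its sign proves nothing. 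For example, take $\mu$ uniform on $\{\omega_1,\omega_2\}$, $\nu=\delta_s$ with $u(s,\cdot)=0$, and $u(\alpha_0,\omega_{1,2})/\lambda=\pm x$. Then $b_\nu=0$ and your slope is $0$, yet $a_\nu(\alpha_0)=\log\cosh x>0$ and $\delta_s$ is not optimal. Convexity of mutual information in the channel only reproduces this affine bound. The quantity $\int(e^{u(\alpha_0,\cdot)/\lambda-b_\nu}-1)\,d\mu$ you expect is the Gateaux derivative of the Jensen envelope $f$, and since $V\le f$ it cannot certify an improvement of the objective.

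The missing idea is that the new action must enter with a \emph{state-dependent} weight. Your abandoned $Q=\delta_{\alpha_0}\otimes e^{u(\alpha_0,\cdot)/\lambda-a_\nu(\alpha_0)-b_\nu}\mu$ encodes exactly this; its $\omega$-marginal defect is repaired by removing mass from $P$ state by state rather than uniformly. Take bounded $h\ge0$, $\bar h=\int h\,d\mu$, and the Bayes-plausible path $P_t(d\alpha|\omega)=(1-th(\omega))P(d\alpha|\omega)+th(\omega)\delta_{\alpha_0}$. Computing $I(P_t)$ against the reference $\nu+\delta_{\alpha_0}$, the $t\log t$ terms from the joint and from the marginal cancel. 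Once $a_\nu=c$ $\nu$-a.e.\ is known, the right derivative is
\[
\int h\Bigl[\tfrac{u(\alpha_0,\cdot)}{\lambda}-b_\nu-c-\log\tfrac{h}{\bar h}\Bigr]\,d\mu\;\ge\;\bar h\,(\log\bar h-c)\qquad\text{whenever } h\le e^{u(\alpha_0,\cdot)/\lambda-b_\nu}.
\]
With $h=\min(e^{u(\alpha_0,\cdot)/\lambda-b_\nu},M)$, monotone convergence gives $\bar h\uparrow e^{a_\nu(\alpha_0)}$. So $a_\nu(\alpha_0)>c$, including $a_\nu(\alpha_0)=+\infty$, contradicts optimality. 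For the normalized choice $h=dQ/d(\delta_{\alpha_0}\otimes\mu)$ (when bounded), the derivative is exactly $a_\nu(\alpha_0)-c$. That is your Danskin formula, realized along an explicit feasible path with no continuity of $\nu\mapsto(a_\nu,b_\nu)$ needed.

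Mind the order: this computation uses $\int a_\nu\,dP(\cdot|\omega)=c$, so you cannot get constancy by integrating the pointwise inequality. Prove constancy first, via perturbations $H=\psi P$ with bounded $\psi\ge0$ and $\int\psi\,dP(\cdot|\omega)=1$, where the first-variation formula $\int g\,d(H-P)$ is legitimate. Then use that $P(\cdot|\omega)$ is equivalent to $\nu$. Nor can you borrow the off-support step from the appendix. Its formula $\int g\,d(H-P)$ comes from differentiating $\log\frac{dP^\varepsilon}{d(\eta\otimes\mu)}$ under the integral, which needs $H\ll P$; for $H=\delta_{\alpha_0}\otimes\mu$ the true slope is the affine one above.
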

The proof of this theorem, given its variational nature, will be deferred to the appendix. For now, we point to several immediate corollaries which follow from the theorem.

\begin{corollary}\label{cor:logitisoptimal}
    If $P$ maximizes \cref{contobjective}, then
    \begin{align}\label{eq:logit}
        \frac{dP}{d(P_\alpha\otimes\mu)}(\alpha,\omega) = \frac{e^{u(\alpha,\omega)/\lambda}}{\int e^{u(\alpha',\omega)/\lambda}\ dP_\alpha(\alpha')}
    \end{align}
    $P$-almost surely.
\end{corollary}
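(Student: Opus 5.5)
We deduce Corollary~\ref{cor:logitisoptimal} directly from Theorem~\ref{gibbsproperty}. Write $\rho(\alpha,\omega):=\frac{dP}{d(P_\alpha\otimes\mu)}(\alpha,\omega)$, using the everywhere-finite version fixed in \cref{densityregularity}. By Theorem~\ref{gibbsproperty}, there is a $\mu$-full set $\Omega_0$ such that for each $\omega\in\Omega_0$ we have a constant $C(\omega)$ with
\begin{align*}
    \frac{u(\alpha,\omega)}{\lambda}-\log\rho(\alpha,\omega)=C(\omega)\quad\text{for $P(\cdot|\omega)$-a.e. }\alpha,
\end{align*}
and $\le C(\omega)$ everywhere. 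Rearranging on the full-measure set gives $\rho(\alpha,\omega)=e^{u(\alpha,\omega)/\lambda}e^{-C(\omega)}$ for $P(\cdot|\omega)$-a.e.\ $\alpha$. The inequality off the support shows $\rho(\alpha,\omega)\ge e^{u(\alpha,\omega)/\lambda}e^{-C(\omega)}$ everywhere, though we only need the equality part.

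The key step is identifying $e^{C(\omega)}$ with the denominator $Z(\omega):=\int e^{u(\alpha',\omega)/\lambda}\,dP_\alpha(\alpha')$. Normalization of the conditional gives $\int\rho(\alpha,\omega)\,dP_\alpha(\alpha)=\frac{dP(\omega)}{d\mu(\omega)}=1$, again taken everywhere by \cref{densityregularity}. The subtlety is that the equality $\rho=e^{u/\lambda-C}$ holds only $P(\cdot|\omega)$-a.e., while the normalization integrates against $P_\alpha$, which may charge sets where $\rho(\cdot,\omega)=0$. On $\{\rho(\cdot,\omega)=0\}$, however, the plateau inequality gives $u/\lambda-\log 0=+\infty\le C(\omega)$, which is impossible unless $e^{u/\lambda}=0$ there. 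I would check this carefully, interpreting $-\log 0=+\infty$ in line with the convention of the plateau definition. Since $u>-\infty$, the set $\{\rho(\cdot,\omega)=0\}$ must be empty for $\omega\in\Omega_0$.

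Granting this, $P(\cdot|\omega)$ and $P_\alpha$ are mutually absolutely continuous for $\omega\in\Omega_0$. Hence the equality $\rho=e^{u/\lambda-C}$ holds $P_\alpha$-a.e., and integrating it against $P_\alpha$ yields $1=e^{-C(\omega)}Z(\omega)$, i.e.\ $e^{C(\omega)}=Z(\omega)$, with $0<Z(\omega)\le e^{\overline u/\lambda}$. Substituting back gives \cref{eq:logit} for $\omega\in\Omega_0$ and $P(\cdot|\omega)$-a.e.\ $\alpha$. Since $P=\int P(\cdot|\omega)\,d\mu(\omega)$, this is a $P$-almost sure statement.

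The main obstacle is the interplay between the two measures $P(\cdot|\omega)$ and $P_\alpha$, i.e.\ ruling out zero density on a $P_\alpha$-positive set. If the everywhere-inequality argument fails because of the $\log 0$ convention, I would fall back on a weaker route. Integrating the a.e.\ equality only over $\{\rho>0\}$ gives $1=e^{-C}\int_{\{\rho>0\}}e^{u/\lambda}\,dP_\alpha\le e^{-C}Z$. The reverse inequality would then come from optimality itself: a perturbation moving conditional mass into $\{\rho=0\}$ must not improve the objective. That perturbation argument is essentially the content of the deferred proof of Theorem~\ref{gibbsproperty}.
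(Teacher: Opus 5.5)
Your proposal is correct and follows the paper's route. The paper states this corollary as an immediate consequence of Theorem~\ref{gibbsproperty}: the plateau property makes $u/\lambda-\log\rho$ constant $P(\cdot|\omega)$-a.e., and normalizing the conditional against $P_\alpha$ identifies that constant with $\log Z(\omega;P_\alpha)$. Your additional step, using the everywhere inequality of the plateau to rule out $\rho(\cdot,\omega)=0$ on a $P_\alpha$-charged set so that the constant is the full partition function rather than one restricted to $\{\rho>0\}$, fills in a detail the paper leaves implicit, and the fallback perturbation argument is not needed.
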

\begin{corollary}
    The Schr\"odinger potential $a_\nu$ of an optimal marginal $\nu\in \argmax V$ is constant $\nu$-a.s.
\end{corollary}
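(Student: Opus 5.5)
The plan is to link an optimal marginal $\nu\in\argmax V$ to an optimal information policy of \cref{contobjective}. We then read off the Schr\"odinger potentials from the logit form in \cref{cor:logitisoptimal}, and use the uniqueness part of \cref{propertiesofEOT}.

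First, if $\nu\in\argmax V$, let $P$ be the Schr\"odinger bridge in $\Pi(\nu,\mu)$. By construction of \cref{eq:SBPapproach}, $\sup_\nu V(\nu)$ equals the supremum in \cref{contobjective}: any feasible $P$ for \cref{contobjective} lies in $\Pi(P_\alpha,\mu)$, and for such $P$ the two objectives coincide since $P_\alpha\otimes P_\omega=\nu\otimes\mu$. Hence the bridge $P$ attains the supremum in \cref{contobjective}, and it has $P_\alpha=\nu$.

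Second, \cref{cor:logitisoptimal} then gives, $P$-a.s.,
\begin{align*}
\frac{dP}{d(\nu\otimes\mu)}(\alpha,\omega)=\exp\Big(\frac{u(\alpha,\omega)}{\lambda}-0-\log Z(\omega)\Big), \qquad Z(\omega)=\int e^{u(\alpha',\omega)/\lambda}\,d\nu(\alpha').
\end{align*}
So the candidate pair is $a\equiv 0$ and $b(\omega)=\log Z(\omega)$. Since $u\le\overline u$, we have $Z(\omega)\le e^{\overline u/\lambda}$, and we need $\log Z>-\infty$ $\mu$-a.s. This holds because $P\ll\nu\otimes\mu$ has $\omega$-marginal $\mu$.

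Third, we check that $(0,b)$ is a genuine pair of Schr\"odinger potentials, so that uniqueness up to translation applies. The obstacle is that the identity above holds only $P$-a.s., not $\nu\otimes\mu$-a.s., and \cref{structure} needs the density on all of $\nu\otimes\mu$. To close the gap, I would verify the Schr\"odinger equations of \cref{Schr\"odingereq} directly.
\begin{itemize}
\item The $b$-equation $e^{b(\omega)}=\int e^{u/\lambda}\,d\nu$ holds by definition.
\item The $a$-equation requires $\int e^{u(\alpha,\omega)/\lambda}/Z(\omega)\,d\mu(\omega)=1$ for $\nu$-a.e.\ $\alpha$.
\end{itemize}

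Integrating the logit density in $\alpha$ against $\nu$ gives total mass $1$. Moreover, $P$ has $\alpha$-marginal $\nu$, so the function $h(\alpha):=\int e^{u/\lambda}Z^{-1}\,d\mu$ satisfies $\int h\,d\nu=1$. If the density formula held $\nu\otimes\mu$-a.e., then $h=1$ $\nu$-a.e.\ would follow from matching marginals. Failing that, I would use \cref{gibbsproperty}: the plateau inequality ``less than the constant everywhere'' gives $dP/d(\nu\otimes\mu)\ge e^{u/\lambda}/Z$ everywhere. Hence $h(\alpha)\le 1$ pointwise (the marginal density of $P$ in $\alpha$ is $1$). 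Combined with $\int h\,d\nu=1$, this yields $h=1$ $\nu$-a.e.

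Finally, by \cref{Schr\"odingereq}, $(0,\log Z)$ is a pair of potentials for $\nu$. By the translation uniqueness in \cref{propertiesofEOT}, any potential $a_\nu$ differs from $0$ by a constant $\nu$-a.s., so $a_\nu$ is constant $\nu$-a.s.

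The step I expect to be delicate is the third one: passing from the $P$-a.s.\ logit identity to the $\nu$-a.e.\ Schr\"odinger equation. The plateau inequality in \cref{gibbsproperty} is the tool I would try first there.
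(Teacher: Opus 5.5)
Your proof is correct and follows essentially the paper's route: the corollary is read off from the Gibbs/logit property of the optimal bridge (\cref{gibbsproperty,cor:logitisoptimal}) combined with the exponential form \cref{structure} of the Schr\"odinger bridge. The step you flag as delicate is not actually an obstacle, so your detour through the Schr\"odinger equations and the ``everywhere'' half of the plateau (which is valid) can be dropped: since $u$ is real-valued and $a_\nu,b_\nu$ are a.e.\ finite, the bridge density $e^{u/\lambda-a_\nu-b_\nu}$ is strictly positive, so $P$ and $\nu\otimes\mu$ share null sets, and comparing the $P$-a.s.\ logit identity with \cref{structure} gives $a_\nu(\alpha)=\log Z(\omega)-b_\nu(\omega)$ for $\nu\otimes\mu$-a.e.\ $(\alpha,\omega)$, whence $a_\nu$ is constant $\nu$-a.e.\ by Fubini.
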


\noindent Define the \textit{Jensen envelope} of $V$ to be
\begin{align*}
        f(\nu)=\int \log\pare{\int e^{u(\alpha,\omega)/\lambda}\ d\nu(\alpha)}\ d\mu(\omega)
\end{align*}
The following corollary and proof makes the name of $f$ self-explanatory.
\begin{corollary}\label{cor:fequalsV}
    $f\geq V$, and if $\nu$ maximizes $V$, then $f(\nu)=V(\nu)$. 
\end{corollary}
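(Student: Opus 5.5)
The inequality $f\geq V$ will come from Jensen's inequality applied to the additive separability of $V$ (\cref{cor:addsep}) together with the Schr\"odinger equations (\cref{Schr\"odingereq}). The equality case will come from the fact that $a_\nu$ is constant $\nu$-a.s. at an optimal marginal.

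\emph{The inequality.} Fix $\nu$ and let $(a_\nu,b_\nu)$ be its Schr\"odinger potentials. The second Schr\"odinger equation gives, for each $\omega$,
\begin{align*}
    b_\nu(\omega)=\log\int e^{u(\alpha,\omega)/\lambda}e^{-a_\nu(\alpha)}\,d\nu(\alpha).
\end{align*}
So \cref{cor:addsep} reads
\begin{align*}
    V(\nu)=\int a_\nu\,d\nu+\int \log\int e^{u(\alpha,\omega)/\lambda}e^{-a_\nu(\alpha)}\,d\nu(\alpha)\,d\mu(\omega).
\end{align*}
Now introduce the probability measure $d\tilde\nu=e^{-a_\nu}d\nu/Z$ with $Z=\int e^{-a_\nu}d\nu$. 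Then $\int e^{u/\lambda}e^{-a_\nu}d\nu=Z\int e^{u/\lambda}d\tilde\nu$. One could compare $\tilde\nu$ with $\nu$ directly, but the cleaner route is the following. For each fixed $\omega$, apply Jensen's inequality (concavity of $\log$) under the probability measure $q_\omega(d\alpha)\propto e^{u(\alpha,\omega)/\lambda}d\nu(\alpha)$:
\begin{align*}
    \log\int e^{u/\lambda}e^{-a_\nu}d\nu=\log\int e^{u/\lambda}d\nu+\log\int e^{-a_\nu}dq_\omega\geq \log\int e^{u/\lambda}d\nu-\int a_\nu\,dq_\omega .
\end{align*}
That inequality points the wrong way, so I would instead use the primal characterization. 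Since $V(\nu)=\sup_{P\in\Pi(\nu,\mu)}\{\cdots\}$, it suffices to bound each coupling. Write $P(d\alpha|\omega)$ for the disintegration of $P$ and $K_\omega(d\alpha)=e^{u/\lambda}d\nu/\int e^{u/\lambda}d\nu$. Then
\begin{align*}
    \int \frac{u}{\lambda}\,dP-D_{KL}(P\|\nu\otimes\mu)=\int\Big[\log\int e^{u/\lambda}d\nu-D_{KL}(P(\cdot|\omega)\|K_\omega)\Big]d\mu(\omega)\leq f(\nu),
\end{align*}
because KL divergence is nonnegative (Gibbs' variational principle). This gives $f\geq V$ with no attainment issues. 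It does require $\int\log\int e^{u/\lambda}d\nu\,d\mu$ to be well defined, which holds since $u\le\overline u$.

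\emph{Equality at an optimum.} For $\nu\in\argmax V$, the preceding corollary says $a_\nu\equiv c$ $\nu$-a.s. Then
\begin{align*}
    b_\nu(\omega)=\log\int e^{u/\lambda}d\nu-c,
\end{align*}
and \cref{cor:addsep} gives $V(\nu)=c+\int(\log\int e^{u/\lambda}d\nu-c)\,d\mu=f(\nu)$. Equivalently, in the primal argument the bridge coincides with $K_\omega$ for $\mu$-a.e. $\omega$, by \cref{cor:logitisoptimal}, so the KL term vanishes.

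\emph{Main obstacle.} The delicate point is justifying that $\nu\in\argmax V$ actually corresponds to a maximizer $P$ of \cref{contobjective}, so that \cref{cor:logitisoptimal} and the constancy of $a_\nu$ apply. For this I would note that $\sup_\nu V(\nu)$ equals the value of \cref{contobjective}, since the objective with $P_\alpha=\nu$ is exactly the SBP. The bridge at an optimal $\nu$ is therefore a maximizer of \cref{contobjective}. A second technicality is integrability of the $\log$ term when $\int e^{u/\lambda}d\nu$ can be tiny. There I would allow $f=-\infty$, in which case the inequality is trivial.
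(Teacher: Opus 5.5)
Your proposal is correct and is essentially the paper's argument: the paper's ``Jensen's inequality and a change of measure'' bound on each coupling is exactly your Gibbs inequality $D_{KL}(P(\cdot|\omega)\|K_\omega)\geq 0$, and its equality step is your primal alternative. That step says the bridge at an optimal $\nu$ maximizes \cref{contobjective}, so by \cref{gibbsproperty} the integrand is constant and Jensen is tight; your computation via the constant $a_\nu$ and the Schr\"odinger equation is an equivalent dual repackaging of the same fact. One small slip: the case $f(\nu)=-\infty$ is not trivial, since it asserts $V(\nu)=-\infty$, but your per-coupling bound holds in the extended reals and gives it anyway.
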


\begin{proof}
    The inequality follows from re-writing the SBP as an integral
    \begin{align*}
        \sup_{P\in\Pi(\nu,\mu)}\iint \log\pare{e^{u(\alpha,\omega)/\lambda} \pare{\frac{d\nu(\alpha)}{dP(\alpha|\omega)}} }\ dP(\alpha|\omega)\ d\mu(\omega)
    \end{align*}
    and applying Jensen's inequality and a change of measure. Note that finite KL-divergence implicitly encodes absolute continuity. If $\nu$ maximizes $V$, then the Schr\"odinger bridge from $\nu$ to $\mu$ maximizes the objective \cref{contobjective}. Hence the integrand is constant and we have an equality.
\end{proof}

\begin{lemma}[Properties of $f$]\label{lem:propertiesoff}
    $f$ has the following properties:
    \begin{enumerate}
        \item Let $\psi,\nu \in \Delta A$. The Gateaux derivative is
        \begin{align*}
            \dd_\psi f(\nu)=\int \frac{\int e^{u(\alpha,\omega)/\lambda}\ d\psi(\alpha) }{\int e^{u(\alpha,\omega)/\lambda}\ d\nu(\alpha)}-1\ d\mu(\omega)
        \end{align*}
        \item If $P$ is an optimal information policy and $P_\alpha=\nu$, then
        \begin{align}\label{eq:Voptimal}
            V(\nu)=\int \log\pare{\int e^{u(\alpha,\omega)/\lambda}\ d\nu(\alpha)}\ d\mu(\omega)
        \end{align}
        The Schr\"odinger potentials are $a_{\nu}(\alpha)=0$ and 
        \begin{align*}
        b_{\nu}(\omega)=\log\pare{\int e^{u(\alpha,\omega)/\lambda}\ d\nu(\alpha)}=\log Z(\omega;\nu)
        \end{align*}
        where $Z$ is the partition function (the denominator of \cref{eq:logit}). Thus,
        \begin{align*}
            f(\nu)=\int b_\nu(\omega)\ d\mu(\omega)=\int \log Z(\omega;\nu)\ d\mu(\omega)
        \end{align*}
        Moreover, $\dis\omega\mapsto   \frac{ e^{u(\alpha,\omega)/\lambda} }{\int e^{u(\alpha',\omega)/\lambda}\ d\nu(\alpha')}$ is in $L^1(\mu)$ for $\nu$-a.e. $\alpha$
    \end{enumerate}
\end{lemma}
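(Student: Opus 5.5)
For part 1 we compute the Gateaux derivative directly along the segment $\nu_t=\nu+t(\psi-\nu)$, $t\in[0,1]$. Write $Z(\omega;\nu)=\int e^{u(\alpha,\omega)/\lambda}\,d\nu(\alpha)$. Then $Z(\omega;\nu_t)=(1-t)Z(\omega;\nu)+tZ(\omega;\psi)$ is affine in $t$, and
\begin{align*}
\frac{f(\nu_t)-f(\nu)}{t}=\int \frac{1}{t}\log\pare{1+t\pare{\frac{Z(\omega;\psi)}{Z(\omega;\nu)}-1}}\,d\mu(\omega).
\end{align*}
By concavity of $\log$, the integrand is monotone non-decreasing as $t\downarrow 0$. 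It converges pointwise to $Z(\omega;\psi)/Z(\omega;\nu)-1$, and it is bounded below by its value at $t=1$, namely $\log Z(\omega;\psi)-\log Z(\omega;\nu)$. The monotone convergence theorem, applied to the increasing family minus this lower bound, then lets us pass the limit inside the integral. This requires $f(\nu)$ and $f(\psi)$ to be finite, or at least the lower bound to be integrable. If instead the limit is $+\infty$, we interpret the formula in the extended sense. This yields the stated formula.

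For part 2, let $P$ be optimal with $P_\alpha=\nu$. By \Cref{cor:logitisoptimal},
\begin{align*}
\frac{dP}{d(\nu\otimes\mu)}=\frac{e^{u(\alpha,\omega)/\lambda}}{Z(\omega;\nu)}
\end{align*}
holds $P$-a.s. Off the support of $P$ (a $\nu\otimes\mu$-set where the density is zero), we must still check \eqref{structure}. We use the regularity convention \eqref{densityregularity} together with the fact that $e^{u/\lambda}>0$ everywhere, so the density is positive everywhere. Hence $P\sim\nu\otimes\mu$ and the identity holds $\nu\otimes\mu$-a.e. Setting $a_\nu\equiv0$ and $b_\nu=\log Z(\cdot;\nu)$ therefore gives exactly the form \eqref{structure}. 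By the converse part of \Cref{propertiesofEOT}, these are Schr\"odinger potentials, unique up to translation; the normalization $a_\nu=0$ fixes the translation. One must check $b_\nu\in L^1(\mu)$. The upper bound $b_\nu\le \overline u/\lambda$ is immediate. For the lower bound, we use that $V(\nu)$ is finite at an optimum, since $V(\nu)\ge V(\delta_{\alpha_0})=\int u(\alpha_0,\cdot)/\lambda\,d\mu>-\infty$ for a suitable $\alpha_0$. Then \Cref{cor:addsep} gives $V(\nu)=\int b_\nu\,d\mu$, so $\int b_\nu\,d\mu$ is finite. This establishes \eqref{eq:Voptimal} and, combined with \Cref{cor:fequalsV}, $f(\nu)=\int b_\nu\,d\mu=\int\log Z\,d\mu$.

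Finally, the $L^1$ claim follows from the Schr\"odinger equations (\Cref{Schr\"odingereq}) with $a_\nu=0$:
\begin{align*}
1=e^{a_\nu(\alpha)}=\int \frac{e^{u(\alpha,\omega)/\lambda}}{Z(\omega;\nu)}\,d\mu(\omega)
\end{align*}
holds for $\nu$-a.e.\ $\alpha$, and the integrand is nonnegative, so it is in $L^1(\mu)$. Alternatively, one can use Fubini on the disintegration $dP(\omega|\alpha)/d\mu$, which integrates to one for $\nu$-a.e.\ $\alpha$.

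The main obstacle is the integrability bookkeeping. In part 1 we need to justify monotone convergence when $f$ may be $-\infty$. In part 2 we need to verify that $b_\nu\in L^1(\mu)$, so that the converse of \Cref{propertiesofEOT} applies. Both rest on finiteness of $V$ at the optimum, which is where the care goes.
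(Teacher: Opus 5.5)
Your Part 1 is sound, and it is tidier than the paper's Fatou/dominated-convergence argument: monotonicity of the concave difference quotients handles the finite and infinite cases at once. Restricting to $t\le 1/2$, where the integrand is at least $2\log\frac12$, even removes the need for $f(\psi)>-\infty$.

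The gap is in Part 2, where you upgrade the logit identity of \cref{cor:logitisoptimal} from $P$-a.s.\ to $\nu\otimes\mu$-a.e. That corollary says nothing on $N=\{dP/d(\nu\otimes\mu)=0\}$. This set is $P$-null but could a priori carry positive $\nu\otimes\mu$-mass. The convention \cref{densityregularity} only makes versions finite and mutually consistent; it asserts no positivity. Declaring the density to be $e^{u/\lambda}/Z>0$ everywhere presupposes the very identity you are trying to extend. No convention on versions could settle this anyway, because whether $N$ is $\nu\otimes\mu$-null is a property of $P$, not of the version. So, as written, you cannot invoke the converse of \cref{propertiesofEOT}.

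Your integrability bookkeeping is also circular. You apply \cref{cor:addsep} with $b_\nu=\log Z$ to deduce $\log Z\in L^1(\mu)$, but that presupposes $\log Z$ is already a potential, which is exactly what required $\log Z\in L^1(\mu)$. Moreover, the existence of $\alpha_0$ with $\int u(\alpha_0,\cdot)\,d\mu>-\infty$ does not follow from $u\le\overline u$.

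The missing step is to use optimality once more, via the forward direction of \cref{propertiesofEOT}.
\begin{enumerate}
\item Every $Q\in\Pi(\nu,\mu)$ is feasible in \cref{contobjective} with exactly the objective that defines $V(\nu)$. So $P$ attains $V(\nu)$ and, by uniqueness of the bridge, is the Schr\"odinger bridge from $\nu$ to $\mu$.
\item Hence $dP/d(\nu\otimes\mu)=e^{u/\lambda-a_\nu-b_\nu}$ with $a_\nu\in L^1(\nu)$ and $b_\nu\in L^1(\mu)$ finite a.e. The density is therefore strictly positive $\nu\otimes\mu$-a.e., and $P\sim\nu\otimes\mu$.
\item Comparing with the logit form gives $a_\nu(\alpha)+b_\nu(\omega)=\log Z(\omega;\nu)$ for $\nu\otimes\mu$-a.e.\ $(\alpha,\omega)$. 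Fixing one good $\omega$ shows $a_\nu$ is $\nu$-a.e.\ constant. After normalization $a_\nu=0$ and $b_\nu=\log Z(\cdot;\nu)$, and $\log Z\in L^1(\mu)$ comes for free.
\end{enumerate}
From there \cref{eq:Voptimal} follows from \cref{cor:addsep}, and $f(\nu)=\int b_\nu\,d\mu$ is just the definition of $f$. Your closing $L^1$ argument via the Schr\"odinger equation is fine.

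For comparison, the paper gets \cref{eq:Voptimal} from $f(\nu)=V(\nu)$ (\cref{cor:fequalsV}) and reads the potentials off \cref{cor:addsep}. Your logit-plus-characterization route, once repaired, identifies $(a_\nu,b_\nu)$ more convincingly, since additive separability alone does not pin the potentials down.
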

\begin{proof}
    The Gateaux derivative follows from direct computation. \textcite[Theorem 2.27]{folland1999real} requires that
    \begin{align*}
        \int \frac{\int e^{u(\alpha,\omega)/\lambda}\ d\psi(\alpha) }{\int e^{u(\alpha,\omega)/\lambda}\ d\nu(\alpha)}\ d\mu(\omega)<\infty
    \end{align*}
    An inspection of the theorem's proof reveals that this is necessary to invoke the dominated convergence theorem. However, because the integrand above is non-negative, we can invoke Fatou's lemma instead to show that
    \begin{align*}
        \liminf_{\varepsilon\to 0} \frac{f(\nu_\varepsilon)-f(\nu)}{\varepsilon}\geq \int \frac{\int e^{u(\alpha,\omega)/\lambda}\ d\psi(\alpha) }{\int e^{u(\alpha,\omega)/\lambda}\ d\nu(\alpha)}-1\ d\mu(\omega)
    \end{align*}
    and hence $\dd_{\psi} f(\nu)=\infty$ whenever the $\dis 
        \int \frac{\int e^{u(\alpha,\omega)/\lambda}\ d\psi(\alpha) }{\int e^{u(\alpha,\omega)/\lambda}\ d\nu(\alpha)}\ d\mu(\omega)=\infty$.\\

    Suppose $P$ is an optimal information policy and $P_\alpha =\nu$. By \cref{cor:fequalsV}, $f(\nu)=V(\nu)$, which implies \cref{eq:Voptimal}. By \cref{cor:addsep}, $a_\nu(\alpha)=0$ and $b_\nu(\omega)=\log Z(\omega;\nu)$. It follows that
    \begin{align*}
         \frac{ e^{u(\alpha,\omega)/\lambda} }{\int e^{u(\alpha',\omega)/\lambda}\ d\nu(\alpha')}=e^{\frac{u(\alpha,\omega)}{\lambda}-b_\nu(\omega)}
    \end{align*}
    and by the Schr\"odinger equations,
    \begin{align*}
        \int e^{\frac{u(\alpha,\omega)}{\lambda}-b_\nu(\omega)}\ d\mu(\omega)=e^{a_\nu(\alpha)}<\infty
    \end{align*} 
    $\nu$-almost surely.
\end{proof}

This is our first hint at the intimate connections between $f$, $V$, $Z$, and $b$ which characterize the optimal information policy $P$.

\begin{theorem}\label{th:thennumaximizesV}
    If $\nu$ maximizes $f$, then $\nu$ maximizes $V$ and    \begin{align*}
        \frac{dP}{d(\nu\otimes\mu)}(\alpha,\omega) = \frac{e^{u(\alpha,\omega)/\lambda}}{\int e^{u(\alpha',\omega)/\lambda}\ d\nu(\alpha')}
    \end{align*}
    is a choice probability which maximizes \cref{contobjective}. 
\end{theorem}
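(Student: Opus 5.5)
Let $\nu^*$ maximize $f$. The plan is to check three things: the candidate $P$ is a legitimate element of $\Pi(\nu^*,\mu)$, it is the Schr\"odinger bridge for $\nu^*$ with $V(\nu^*)=f(\nu^*)$, and $V(\nu)\le f(\nu)$ for every $\nu$. Together these give $V(\nu)\le f(\nu)\le f(\nu^*)=V(\nu^*)$, so $\nu^*$ maximizes $V$. Since \cref{eq:SBPapproach} is a rewriting of \cref{contobjective} (any feasible $P$ lies in $\Pi(P_\alpha,\mu)$), the bridge at $\nu^*$ then maximizes \cref{contobjective}.

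\textbf{Step 1: first-order condition.} I will first extract the first-order condition for $f$ at $\nu^*$ from \cref{lem:propertiesoff}(1). For any $\psi\in\Delta A$, the segment $\nu_\varepsilon=(1-\varepsilon)\nu^*+\varepsilon\psi$ stays in $\Delta A$, so optimality forces $\dd_\psi f(\nu^*)\le 0$. Taking $\psi=\delta_\alpha$ (a point mass) gives, for \emph{every} $\alpha\in A$,
\begin{align*}
    \int \frac{e^{u(\alpha,\omega)/\lambda}}{Z(\omega;\nu^*)}\ d\mu(\omega)\le 1 ,
\end{align*}
which is the generalization of the Kuhn--Tucker condition \cref{eq:MMFOC}. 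In particular the left-hand side is finite. The proof of \cref{lem:propertiesoff} shows the derivative would be $+\infty$ otherwise, which is exactly what rules out non-integrability. I expect this to be the main obstacle. One must make sure $f(\nu^*)$ is finite, so that the difference quotients make sense. One must also handle the fact that $Z(\omega;\nu^*)>0$ everywhere so the ratio is defined. Both should follow from $u\le\overline u$ together with $f(\nu^*)\ge f(\text{anything})>-\infty$.

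\textbf{Step 2: the candidate is a bridge.} Integrating the displayed inequality against $\nu^*$ and applying Fubini, the left side becomes $\int Z/Z\,d\mu=1$. Hence the inequality holds with equality $\nu^*$-a.e. Set $a(\alpha)=\log\int e^{u/\lambda}/Z\,d\mu$, which is $\le 0$ everywhere and $=0$ $\nu^*$-a.e., and set $b(\omega)=\log Z(\omega;\nu^*)$. These satisfy the Schr\"odinger equations of \cref{Schr\"odingereq}. The $b$-equation holds by definition of $Z$. The $a$-equation holds by definition of $a$, and it is consistent with the density $e^{u/\lambda-b}$ because $a=0$ $\nu^*$-a.e. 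The two marginal checks for the candidate $P$ are then the same computations: its $\omega$-marginal is $\mu$ by definition of $Z$, and its $\alpha$-marginal is $\nu^*$ by the a.e. equality. By the converse direction of \cref{propertiesofEOT}, $P$ is the Schr\"odinger bridge between $\nu^*$ and $\mu$. \cref{cor:addsep} then gives $V(\nu^*)=\int a\,d\nu^*+\int b\,d\mu=f(\nu^*)$. I will need $b\in L^1(\mu)$, which again comes from finiteness of $f(\nu^*)$ and the upper bound $\overline u$.

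\textbf{Step 3: conclusion.} The inequality $V\le f$ on all of $\Delta A$ is the first half of \cref{cor:fequalsV}. This closes the chain from the first paragraph, so $\nu^*$ maximizes $V$ and $P$ maximizes \cref{contobjective}. Finally, the density representation \cref{densityregularity} can be taken to be the displayed formula everywhere, so the stated $P$ is a valid choice of version.
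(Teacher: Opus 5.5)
Your proof is correct and follows essentially the same route as the paper. You derive the first-order condition for $f$ at $\nu^*$, note that $a=0$ ($\nu^*$-a.e.) and $b=\log Z(\cdot;\nu^*)$ are Schr\"odinger potentials so that the displayed density is the bridge with $V(\nu^*)=f(\nu^*)$, and then close the chain $V(\nu)\le f(\nu)\le f(\nu^*)=V(\nu^*)$. The only differences are minor: you obtain the $\nu^*$-a.e.\ equality explicitly (Dirac directions, then integration against $\nu^*$, as in the appendix proof of \cref{th:FOC}), and you take $V\le f$ from the Jensen half of \cref{cor:fequalsV}, whereas the paper re-derives it by weak duality with the candidate potentials $a'=0$, $b'=\log Z(\cdot;\nu')$.
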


\begin{proof}
    Let $\nu$ maximize $f$. By \cref{lem:propertiesoff}, the Gateaux derivative in the direction of $\psi$ is
    \begin{align*}
        \dd_{\psi} f(\nu)=\int \frac{\int e^{u(\alpha,\omega)/\lambda}d(\psi-\nu)(\alpha)}{\int e^{u(\alpha',\omega)/\lambda}d\nu(\alpha')}\ d\mu(\omega)
    \end{align*}
    whenever the integral is $<\infty$. \\
    
    Consequently, a necessary first order condition is that
    \begin{align*}
        \int \frac{ e^{u(\alpha,\omega)/\lambda}}{\int e^{u(\alpha',\omega)/\lambda}d\nu(\alpha')}\ d\mu(\omega)=1
    \end{align*}
    for $\nu$-a.e. $\alpha$. Then,
    \begin{align*}
        a(\alpha)=0 && b(\omega)=\log\pare{\int e^{u(\alpha,\omega)/\lambda}\ d\nu(\alpha)}
    \end{align*}
   can be verified to be valid Schr\"odinger potentials by \cref{Schr\"odingereq}. By the duality formula,
    \begin{align*}
        V(\nu)=\int b(\omega)\ d\mu(\omega)=f(\nu)
    \end{align*}
    Now consider any other $\nu'$. Define $a'=0$ as a ``candidate potential'' and
    \begin{align*}
        b'(\omega)=\log\pare{\int e^{u(\alpha,\omega)/\lambda}\ d\nu'(\alpha)}
    \end{align*}
    via the Schr\"odinger equation as the other candidate potential, such that 
    \begin{align*}
        V(\nu')\leq \int a'\ d\nu' + \int b'\ d\mu+\iint e^{\frac{u}{\lambda}-a'-b'}\ d\nu'\ d\mu-1=\int b'\ d\mu
    \end{align*}
    The candidate potentials are not necessarily the true potentials associated with $\nu'$. The first inequality is strict if $b'$ is not the true potential of $\nu'$. The duality formula implies that
    \begin{align*}
        V(\nu')\leq \int b'(\omega)\ d\mu(\omega)=f(\nu')\leq f(\nu)=V(\nu)
    \end{align*}This proves that if $\nu$ maximizes $f$, then $\nu$ maximizes $V$. Via \cref{propertiesofEOT}, 
    \begin{align*}
        \frac{dP}{d(\nu\otimes\mu)}(\alpha,\omega) = \frac{e^{u(\alpha,\omega)/\lambda}}{\int e^{u(\alpha',\omega)/\lambda}\ d\nu(\alpha')}
    \end{align*}
    defines the Schr\"odinger bridge from $\nu$ to $\mu$.
\end{proof}

\begin{corollary}[\textcite{matvejka2015rational}, Lemma 2]
    Suppose $f$ has a maximum in $\Delta A$.  Then $\nu$ maximizes $V$ if and only if $\nu$ maximizes $f$.
\end{corollary}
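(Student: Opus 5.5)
The corollary combines Theorem~\ref{th:thennumaximizesV} with Corollary~\ref{cor:fequalsV}. The hypothesis that $f$ attains its maximum on $\Delta A$ is what connects them. Write $\nu^\ast\in\argmax f$, which exists by assumption, and take $\sup V$ over $\Delta A$.

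\emph{($\Leftarrow$)} This direction is exactly Theorem~\ref{th:thennumaximizesV}. If $\nu$ maximizes $f$, then $\nu$ maximizes $V$, and the bridge written there is an optimal information policy. Nothing further is needed.

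\emph{($\Rightarrow$)} Suppose $\nu$ maximizes $V$.
\begin{enumerate}
\item Apply Theorem~\ref{th:thennumaximizesV} to $\nu^\ast$. This gives that $\nu^\ast$ maximizes $V$ and that $V(\nu^\ast)=f(\nu^\ast)$, the identity established inside that proof via the duality formula. In particular $\max V=\max f$.
\item Apply Corollary~\ref{cor:fequalsV} to $\nu$. Since $\nu$ maximizes $V$, it gives $f(\nu)=V(\nu)$.
\item Combine these with the optimality of $\nu$ for $V$:
\begin{align*}
f(\nu)=V(\nu)=V(\nu^\ast)=f(\nu^\ast)=\max f,
\end{align*}
so $\nu$ maximizes $f$.
\end{enumerate}

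The only delicate point is finiteness. We must ensure $V(\nu^\ast)=f(\nu^\ast)$ is a genuine finite equality rather than $-\infty=-\infty$ or an undefined quantity. First, $f\le \overline u/\lambda<\infty$. Second, $V(\nu)>-\infty$ for some $\nu$, for instance by taking the product coupling, whose KL cost is zero, whenever $u$ is $\mu$-integrable for some $\nu$. Together these make both suprema finite, so the chain of equalities above is meaningful.

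A second subtlety is that Corollary~\ref{cor:fequalsV} requires the Schrödinger bridge from a $V$-maximizing $\nu$ to maximize \eqref{contobjective}. This holds because $\sup_P$ in \eqref{contobjective} equals $\sup_\nu V(\nu)$ by the nested decomposition \eqref{eq:SBPapproach}. I take both this and the previous finiteness point as already granted by the earlier results, so the proof reduces to the short chain of equalities in step 3.
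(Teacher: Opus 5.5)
Your proposal is correct and matches the paper's proof: ($\Leftarrow$) is Theorem~\ref{th:thennumaximizesV}, and ($\Rightarrow$) takes a maximizer $\nu^\ast$ of $f$ and combines Theorem~\ref{th:thennumaximizesV} with Corollary~\ref{cor:fequalsV} to get $f(\nu)=V(\nu)=V(\nu^\ast)=f(\nu^\ast)$. The only differences are cosmetic: you read off $V(\nu^\ast)=f(\nu^\ast)$ from inside the theorem's proof rather than from Corollary~\ref{cor:fequalsV} applied to $\nu^\ast$. Your finiteness caveat is also unnecessary, since the chain of equalities holds in the extended reals anyway.
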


\begin{proof}
    $(\Leftarrow)$ See \cref{th:thennumaximizesV}.\\

    \noindent $(\Rightarrow)$ if $\nu$ maximizes $V$, let $\nu'$ maximize $f$. By \cref{cor:fequalsV}, $f(\nu)=V(\nu)$. By \cref{th:thennumaximizesV}, $\nu'$ maximizes $V$. It follows that $f(\nu)=V(\nu)=V(\nu')=f(\nu')$. Hence $\nu$ maximizes $f$.
\end{proof}

This finalizes the connection between $V$ and $f$, demonstrating that to find an optimal information policy, it is necessary and sufficient to find a marginal in $\Delta A$ which maximizes $f$---a much cleaner function to maximize than $V$---and then defining the information policy via \cref{eq:logit}. The following theorem confirms that $f$ is concave, thus characterizing the set of optimal marginal probabilities as convex.

\begin{lemma}[\textcite{caplin2013behavioral}, Theorem 2]\label{lem:concavity}
    $f$ is concave, and if $\nu$ and $\nu'$ both maximize $f$, then 
    \begin{align*}
        \int e^{u(\alpha,\omega)/\lambda}\ d\nu(\alpha)=\int e^{u(\alpha,\omega)/\lambda}\ d\nu'(\alpha)
    \end{align*}
    for $\mu$-a.e. $\omega$.
\end{lemma}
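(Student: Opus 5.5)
Concavity comes from writing $f$ as a composition of a linear map and a concave function. Define the linear map $T:\nu\mapsto Z(\cdot;\nu)=\int e^{u(\alpha,\cdot)/\lambda}\,d\nu(\alpha)$, which sends $\Delta A$ into nonnegative measurable functions on $\Omega$ bounded above by $e^{\overline u/\lambda}$. Then $f(\nu)=\int \log T\nu\ d\mu$. For $\nu_t=t\nu+(1-t)\nu'$ linearity gives $T\nu_t=tT\nu+(1-t)T\nu'$ pointwise in $\omega$. Strict concavity of $\log$ on $(0,\infty)$, extended by $\log 0=-\infty$, then gives
\begin{align*}
\log T\nu_t(\omega)\geq t\log T\nu(\omega)+(1-t)\log T\nu'(\omega)\qquad\text{for every }\omega .
\end{align*}
Integrating against $\mu$ yields $f(\nu_t)\geq tf(\nu)+(1-t)f(\nu')$. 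To make the integration legitimate, note that each integrand is bounded above by $\overline u/\lambda$, so every integral is well defined in $[-\infty,\overline u/\lambda]$ and the inequality survives integration even when some values are $-\infty$.

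For the uniqueness statement, let $\nu,\nu'$ both maximize $f$ with common value $M=\max f$. By \cref{cor:fequalsV}, $M=V(\nu)$. Here $V(\nu)>-\infty$ because the product coupling $\nu\otimes\mu$ gives a finite lower bound whenever $u$ is integrable, and if $M=-\infty$ the claim is vacuous or degenerate, which I will handle separately. So assume $M$ is finite. By the concavity just proved, $f(\nu_{1/2})\geq M$, and by maximality $f(\nu_{1/2})\le M$, so equality holds. Define the pointwise gap
\begin{align*}
g(\omega)=\log T\nu_{1/2}(\omega)-\tfrac12\log T\nu(\omega)-\tfrac12\log T\nu'(\omega)\geq 0 .
\end{align*}
Since $f(\nu)$ and $f(\nu')$ are finite, both $\log T\nu$ and $\log T\nu'$ lie in $L^1(\mu)$, so they are finite $\mu$-a.e. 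Therefore $\int g\,d\mu=f(\nu_{1/2})-\tfrac12 f(\nu)-\tfrac12 f(\nu')=0$. A nonnegative function with zero integral vanishes $\mu$-a.e.

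Strict concavity of $\log$ on $(0,\infty)$ then gives $T\nu(\omega)=T\nu'(\omega)$ for $\mu$-a.e. $\omega$, which is exactly the claimed identity $\int e^{u/\lambda}d\nu=\int e^{u/\lambda}d\nu'$.

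The main obstacle is integrability bookkeeping. The subtraction defining $\int g\,d\mu$ is only valid once $\log T\nu$ and $\log T\nu'$ are known to be $\mu$-integrable, and this rests on the maximal value being finite, i.e.\ on excluding $f\equiv-\infty$. I will secure finiteness by exhibiting some $\nu_0$ with $f(\nu_0)>-\infty$, for instance a Dirac mass $\delta_{\alpha_0}$ with $u(\alpha_0,\cdot)\in L^1(\mu)$, or by directly invoking that $V$ is finite at the optimum under the standing assumptions.
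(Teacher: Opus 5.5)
Your argument is the same as the paper's: integrate the pointwise concavity inequality for $\log$ applied to $T\nu_t=tT\nu+(1-t)T\nu'$, then use strict concavity at the midpoint to force $T\nu=T\nu'$ $\mu$-a.e. for two maximizers. Your version is more careful than the paper's, which leaves the midpoint step implicit and never checks that the $f$-values can be subtracted.

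The one thing to fix is how you treat $\max f=-\infty$. That case is neither vacuous nor something you can handle separately, because the lemma is false there. Take $A=\{a_1,a_2\}$, $u(a_1,\omega)=-g(\omega)$ and $u(a_2,\omega)=-g(\omega)-1$, with $g\geq 0$ and $\int g\,d\mu=\infty$ (e.g.\ $g(\omega)=\omega^2$ and $\mu$ Cauchy). Then $f\equiv-\infty$, so every $\nu$ is a maximizer, yet $T\nu=(\nu(a_1)+\nu(a_2)e^{-1/\lambda})e^{-g/\lambda}$ depends on $\nu$.

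The paper's ``strict inequality'' step fails in exactly the same way, since $-\infty>-\infty$ is false. Appealing to finiteness of $V$ does not help either, because $V\leq f$. So $\sup f>-\infty$ has to be an explicit hypothesis, and the standing assumptions do not supply it. It holds, for example, if $u(\alpha_0,\cdot)\in L^1(\mu)$ for some $\alpha_0$, because then $f(\delta_{\alpha_0})=\int u(\alpha_0,\omega)/\lambda\,d\mu(\omega)$ is finite. Under that hypothesis your proof is complete.

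Also drop the detour through $V$. \cref{cor:fequalsV} is about maximizers of $V$, not of $f$; you would need \cref{th:thennumaximizesV}. In any case your argument only uses finiteness of $f$ at the maximizers.
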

\begin{proof}
    To see that $f$ is concave, simply observe that
    \begin{align*}
        f(\beta\nu+(1-\beta)\nu')&=\int \log\pare{\beta \int e^{u(\alpha,\omega)/\lambda}\ d\nu(\alpha)+(1-\beta)\int e^{u(\alpha,\omega)/\lambda}\ d\nu'(\alpha)} \ d\mu(\omega)\\
        &\geq\int \beta \log \pare{\int e^{u(\alpha,\omega)/\lambda}\ d\nu(\alpha)}+(1-\beta)\log\pare{\int e^{u(\alpha,\omega)/\lambda}\ d\nu'(\alpha)}\ d\mu(\omega)\\
        &=\beta f(\nu)+(1-\beta)f(\nu')
    \end{align*}
    with strict equality if
    \begin{align*}
        \int e^{u(\alpha,\omega)/\lambda}\ d\nu(\alpha)\neq \int e^{u(\alpha,\omega)/\lambda}\ d\nu'(\alpha)
    \end{align*}
    on a set of positive $\mu$.
\end{proof}

We arrive at the necessary and sufficient conditions for optimality of $\nu$, derived by CDL. 
\begin{theorem}[CDL, Proposition 1, First-Order Condition]\label{th:FOC}
    Suppose $f$ has a maximum in $\Delta A$. $\nu$ maximizes $f(\nu)$ if and only if for $\nu$-a.e. $\alpha$,
    \begin{align}\label{FOC}
        \int \frac{e^{u(\alpha ,\omega)/\lambda}}{\int e^{u(\alpha ',\omega)/\lambda} \ d\nu(\alpha')}\ d\mu(\omega)=1
    \end{align}
     and
    \begin{align}\label{FOCleq}
        \int \frac{e^{u(\alpha ,\omega)/\lambda}}{\int e^{u(\alpha ',\omega)/\lambda} \ d\nu(\alpha')}\ d\mu(\omega)\leq 1
    \end{align}
    for \textbf{every} $\alpha \in A$. In other words, $\dis \alpha\mapsto \int \frac{e^{u(\alpha ,\omega)/\lambda}}{\int e^{u(\alpha ',\omega)/\lambda} \ d\nu(\alpha')}\ d\mu(\omega)$ is a $\nu$-plateau.
\end{theorem}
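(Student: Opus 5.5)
Write $Z(\omega;\nu)=\int e^{u(\alpha',\omega)/\lambda}\,d\nu(\alpha')$ and $g_\nu(\alpha)=\int e^{u(\alpha,\omega)/\lambda}/Z(\omega;\nu)\,d\mu(\omega)$. The statement asks for two directions. Both rest on the Gateaux derivative from \cref{lem:propertiesoff} together with concavity of $f$ from \cref{lem:concavity}.

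The key identity is that for any $\psi\in\Delta A$, by Tonelli (everything is nonnegative),
\begin{align*}
\dd_\psi f(\nu)=\int g_\nu\,d\psi-1 .
\end{align*}
This holds in $[-1,\infty]$. In particular $\int g_\nu\,d\nu=\int Z/Z\,d\mu=1$, provided $Z(\omega;\nu)>0$. That positivity holds because $e^{u/\lambda}>0$ wherever $u>-\infty$, which we have since $u$ is real valued.

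\textbf{Necessity.} Let $\nu$ maximize $f$.
\begin{itemize}
\item Take $\psi=\delta_\alpha$ for an arbitrary point $\alpha\in A$. Since $\nu$ is a maximizer, $f(\nu+\varepsilon(\delta_\alpha-\nu))\le f(\nu)$ for every $\varepsilon\in(0,1)$.
\item The Fatou argument in the proof of \cref{lem:propertiesoff} gives
\begin{align*}
0\ge\liminf_{\varepsilon\to0}\frac{f(\nu_\varepsilon)-f(\nu)}{\varepsilon}\ge g_\nu(\alpha)-1 .
\end{align*}
So $g_\nu(\alpha)\le1$ for \emph{every} $\alpha$, which is \cref{FOCleq}. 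No a.e.\ qualifier enters, because a Dirac perturbation is a legitimate element of $\Delta A$.
\item Since $\int g_\nu\,d\nu=1$ and $g_\nu\le1$ pointwise, we get $g_\nu=1$ $\nu$-a.e., which is \cref{FOC}.
\end{itemize}
One technical point is that $f(\nu)$ must be finite for the difference quotient to make sense. It is bounded above by $\overline u/\lambda$. It is $>-\infty$ at a maximizer whenever $f$ is finite somewhere, for instance at a Dirac mass, provided $\int u(\alpha_0,\cdot)\,d\mu>-\infty$ for some $\alpha_0$. I would note this assumption in passing.

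\textbf{Sufficiency.} Suppose $g_\nu\le1$ everywhere and $g_\nu=1$ $\nu$-a.e. For any $\nu'$, integrating gives $\int g_\nu\,d\nu'\le1$. Concavity of $\log$ then yields, pointwise in $\omega$,
\begin{align*}
\log Z(\omega;\nu')-\log Z(\omega;\nu)\le \frac{Z(\omega;\nu')}{Z(\omega;\nu)}-1 .
\end{align*}
Integrating against $\mu$ and applying Tonelli to the right-hand side,
\begin{align*}
f(\nu')-f(\nu)\le \int g_\nu\,d\nu'-1\le 0 .
\end{align*}
Hence $\nu$ maximizes $f$. This direction is clean and uses only the elementary inequality $\log x\le x-1$; it does not need the hypothesis that a maximum exists. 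Alternatively one could route it through \cref{th:thennumaximizesV}, by building the Schr\"odinger potentials $a=0$ and $b=\log Z$, but the direct inequality is simpler.

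\textbf{Main obstacle.} The delicate part is the necessity direction's handling of infinities. Subtracting $f(\nu)$ requires it to be finite, and the derivative may equal $+\infty$. The Fatou lower bound handles the infinite case automatically: if $g_\nu(\alpha)=\infty$, the liminf would be $+\infty>0$, contradicting maximality. I would also check carefully that the everywhere-defined versions stipulated in \cref{densityregularity} make $g_\nu(\alpha)$ a well-defined pointwise quantity. This is exactly what turns ``$\le1$ a.e.'' into ``$\le1$ for every $\alpha$'', i.e.\ the plateau property.
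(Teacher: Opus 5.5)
Your proof is correct. The necessity half matches the paper's argument: perturb toward $\delta_\alpha$, then integrate against $\nu$. You make the Fatou lower bound and the finiteness of $f(\nu)$ explicit, and your finiteness remark is a real point the paper skips. If $f\equiv-\infty$, every $\nu$ is a maximizer and the inequality at every $\alpha$ can fail.

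For sufficiency you take a different route. The paper fixes a competitor $\psi$ and applies Jensen to $x\mapsto 1/x$ to get
\[
\Bigl(\int Z(\omega;\nu)/Z(\omega;\psi)\,d\mu\Bigr)^{-1}<\int Z(\omega;\psi)/Z(\omega;\nu)\,d\mu\le 1,
\]
unless the ratio is $\mu$-a.s.\ constant. It concludes $\dd_\nu f(\psi)>0$, so every such $\psi$ can be strictly improved by moving toward $\nu$. To turn this into ``$\nu$ is a maximizer'' the paper must lean on the hypothesis that a maximizer exists. It must also dispose, implicitly, of the proportional case $Z(\cdot;\psi)=cZ(\cdot;\nu)$, where $c=\int g_\nu\,d\psi\le 1$ gives $f(\psi)=f(\nu)+\log c\le f(\nu)$.

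Your inequality $\log x\le x-1$ is just the supergradient inequality $f(\nu')\le f(\nu)+\dd_{\nu'}f(\nu)$. It gives $f(\nu')\le f(\nu)$ directly for every $\nu'$, with no existence hypothesis and no case split. Because $\log x<x-1$ for $x\neq 1$, it also recovers the paper's by-product that any other maximizer has the same partition function $\mu$-a.e. The paper's version only adds the more dynamic fact that non-optimal competitors have a strictly ascending direction toward $\nu$; logically your route is cleaner.

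One small point: the planned check against the everywhere-defined densities is unnecessary. $g_\nu(\alpha)$ depends only on $u$, $\mu$ and $\nu$, and is defined at every $\alpha$ as the integral of a nonnegative measurable function.
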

This proof once again relies on variational arguments, so we defer it to the appendix. In the meantime, observe that by \cref{lem:concavity}, we have that if $\nu$ and $\nu'$ both maximize $V$, then 
\begin{align*}
        \int e^{u(\alpha,\omega)/\lambda}\ d\nu(\alpha)=\int e^{u(\alpha,\omega)/\lambda}\ d\nu'(\alpha)=z(\omega)
\end{align*}
$\mu$-a.s. This implies the following.

\begin{corollary}\label{cor:support}
    If $\nu$ maximizes $V$ and $f$, then for $\nu$-a.e. $\alpha$,
    \begin{align*}
    \int \frac{e^{u(\alpha,\omega)/\lambda}}{z(\omega)}\ d\mu(\omega)=1
\end{align*}
\end{corollary}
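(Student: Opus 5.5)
The plan is to combine \cref{th:FOC} with \cref{lem:concavity}, using the reference optimizer that defines $z$. Fix a maximizer $\nu^\ast$ of $f$; by the corollary of \textcite{matvejka2015rational} it also maximizes $V$. Set $z(\omega)=\int e^{u(\alpha',\omega)/\lambda}\,d\nu^\ast(\alpha')$. Now let $\nu$ be any maximizer of $V$ and $f$. By \cref{lem:concavity} applied to the pair $\nu,\nu^\ast$, we have
\begin{align*}
\int e^{u(\alpha',\omega)/\lambda}\,d\nu(\alpha')=z(\omega)
\end{align*}
for all $\omega$ outside a $\mu$-null set $N$. This is the only input needed to replace the $\nu$-dependent denominator in \cref{FOC} by the common function $z$.

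Next, apply \cref{th:FOC} to $\nu$, which is legitimate since $f$ attains its maximum at $\nu$. It gives, for $\nu$-a.e. $\alpha$,
\begin{align*}
\int \frac{e^{u(\alpha,\omega)/\lambda}}{\int e^{u(\alpha',\omega)/\lambda}\,d\nu(\alpha')}\,d\mu(\omega)=1 .
\end{align*}
For each such $\alpha$, the integrands $e^{u(\alpha,\omega)/\lambda}/\int e^{u/\lambda}d\nu$ and $e^{u(\alpha,\omega)/\lambda}/z(\omega)$ agree for $\omega\notin N$. Since $\mu(N)=0$, the two $\mu$-integrals coincide, and this yields the claim for $\nu$-a.e. $\alpha$. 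The null set $N$ does not depend on $\alpha$, so no uncountable union of exceptional sets arises. Alternatively, one can argue via \cref{lem:propertiesoff}: $a_\nu=0$ and $b_\nu=\log Z(\cdot;\nu)=\log z$ $\mu$-a.s., and then the first Schr\"odinger equation in \cref{Schr\"odingereq} gives $1=e^{a_\nu(\alpha)}=\int e^{u/\lambda-\log z}\,d\mu$ for $\nu$-a.e. $\alpha$.

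The main point needing care is well-definedness and positivity of $z$. We need $0<z(\omega)<\infty$ so that the quotients make sense. Finiteness follows from $u\le\overline u$. Positivity holds because $e^{u/\lambda}>0$ wherever $u>-\infty$, and $u$ is real-valued. Independence of $z$ from the chosen maximizer, up to $\mu$-null sets, is again \cref{lem:concavity}. I do not expect a genuine obstacle beyond noting these measure-theoretic points, and that \cref{th:FOC} (or \cref{lem:propertiesoff}) is invoked only in its ``$\nu$-a.e.'' form.
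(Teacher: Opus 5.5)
Your proof is correct and follows essentially the same route as the paper, which gets the corollary in one line. It applies \cref{th:FOC} to obtain the normalization $\nu$-a.e. with the $\nu$-dependent denominator, then uses \cref{lem:concavity} to replace that denominator by the common $z(\omega)$ shared by all maximizers, $\mu$-a.s. Your remarks that the null set $N$ does not depend on $\alpha$, and that $0<z<\infty$, spell out details the paper leaves implicit.
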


\Cref{lem:concavity,cor:support} jointly characterize the argmax of $V$ and $f$. \Cref{lem:concavity} shows that the argmax is convex. \Cref{cor:support} shows that each $\nu\in\argmax f$ must have a support which is in the closure of
\begin{align*}
    S=\bigg\{\alpha \in A: \int \frac{e^{u(\alpha,\omega)/\lambda}}{z(\omega)}\ d\mu(\omega)=1\bigg\}
\end{align*}
That is, consideration sets are restricted to being subsets of $\overline S$.
\begin{definition}
    The \textit{consideration set} associated with $\nu$ is the support of $\nu$.
\end{definition}

The final two results concern the \textit{invariant likelihood ratios}, which are used in the posterior-based approach which is the subject of CDL.

\begin{theorem}[CDL, Proposition 2, Invariant Likelihood Ratio]
    Let $P$ maximize \cref{contobjective} such that $a(\alpha)=0$, $b(\omega)=\log Z(\omega;P_\alpha)$ are the associated potentials. For $\nu$-a.e. $\alpha$,
    \begin{align*}
        dP(\omega|\alpha)=\frac{e^{u(\alpha,\omega)/\lambda}}{Z(\omega;P_\alpha)}\ d\mu(\omega)
    \end{align*}
\end{theorem}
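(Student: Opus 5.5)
The plan is to read off the posterior kernel $P(\cdot|\alpha)$ directly from the Radon--Nikodym structure of the Schr\"odinger bridge, \cref{structure}, with $a_\nu = 0$ and $b_\nu = \log Z(\cdot;\nu)$, where $\nu = P_\alpha$.

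\textbf{Step 1: identify the density.} Since $P$ maximizes \cref{contobjective}, its marginal $\nu$ maximizes $V$, and $P$ is the Schr\"odinger bridge from $\nu$ to $\mu$; otherwise replacing $P$ by the bridge would strictly raise the objective, by uniqueness of the bridge. By \cref{lem:propertiesoff}(2) the potentials can be taken as $a_\nu = 0$ and $b_\nu(\omega) = \log Z(\omega;\nu)$. Then \cref{propertiesofEOT} gives
\begin{align*}
\frac{dP}{d(\nu\otimes\mu)}(\alpha,\omega) = \frac{e^{u(\alpha,\omega)/\lambda}}{Z(\omega;\nu)}.
\end{align*}
This is also consistent with \cref{cor:logitisoptimal}.

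\textbf{Step 2: disintegrate along $\alpha$.} Define the kernel $K(\alpha, d\omega) := e^{u(\alpha,\omega)/\lambda} Z(\omega;\nu)^{-1}\, d\mu(\omega)$. By Fubini--Tonelli (the integrand is nonnegative), for any measurable $E\subseteq A$ and $F\subseteq\Omega$ we have
\begin{align*}
P(E\times F) = \int_E K(\alpha, F)\, d\nu(\alpha).
\end{align*}
For this to be a disintegration, $K(\alpha,\cdot)$ must be a probability measure for $\nu$-a.e. $\alpha$. This is exactly the Schr\"odinger equation of \cref{Schr\"odingereq} with $a_\nu = 0$, namely $\int e^{u/\lambda - b_\nu}\, d\mu = e^{0} = 1$ for $\nu$-a.e. $\alpha$. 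Equivalently, it is the first-order condition \cref{FOC}, and the $L^1(\mu)$ statement at the end of \cref{lem:propertiesoff} guarantees finiteness. Measurability of $\alpha \mapsto K(\alpha,F)$ follows from joint measurability of $u$ and $Z$ via Tonelli.

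\textbf{Step 3: invoke uniqueness.} Because $\Omega$ is Polish, the disintegration of $P$ with respect to its $A$-marginal is unique $\nu$-a.e. Hence $P(\cdot|\alpha) = K(\alpha,\cdot)$ for $\nu$-a.e. $\alpha$, which is the claim. Under the regularity convention \cref{densityregularity}, one can instead read this off directly: $dP(\omega|\alpha)/d\mu(\omega) = dP/d(P_\alpha\otimes\mu)$.

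\textbf{Main obstacle.} The main subtlety is the null-set bookkeeping in Step 2. The identity $\int K(\alpha,d\omega)=1$ holds only $\nu$-a.e., so on the exceptional set $K$ need not be a probability kernel. I would redefine $K$ there as $\mu$, which changes nothing up to $\nu$-null sets. I would also check that the density identity, which holds $\nu\otimes\mu$-a.e., transfers to a statement for $\nu$-a.e. fixed $\alpha$; this again follows from Fubini.
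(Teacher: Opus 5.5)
Your proposal is correct and follows the paper's route. The paper's proof is the one-line observation that the claim follows from \cref{propertiesofEOT} with $a_\nu=0$ and $b_\nu=\log Z(\cdot;\nu)$, which is your Step 1, while your Steps 2--3 (Fubini, the Schr\"odinger equation giving $K(\alpha,\Omega)=1$ for $\nu$-a.e.\ $\alpha$, and $\nu$-a.e.\ uniqueness of disintegrations on a Polish space) spell out the passage from the joint density to the conditional kernel that the paper leaves implicit.
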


\begin{proof}
    This follows from \cref{propertiesofEOT}.
\end{proof}

\begin{corollary}[ILR Inequalities]
    Let $P$ maximize \cref{contobjective} and $P_\alpha=\nu$. For every $\alpha$, for $\nu$-a.e. $\alpha'$,
    \begin{align*}
        \int \exp\pare{\frac{u(\alpha,\omega)-u(\alpha',\omega)}{\lambda}}\ dP(\omega|\alpha')\leq 1
    \end{align*}
    For $\nu$-a.e. $\alpha$, for $\nu$-a.e. $\alpha'$, 
    \begin{align*}
        \int \exp\pare{\frac{u(\alpha,\omega)-u(\alpha',\omega)}{\lambda}}\ dP(\omega|\alpha)= 1
    \end{align*}
\end{corollary}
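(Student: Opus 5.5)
The plan is to substitute the invariant likelihood ratio formula (CDL, Proposition 2) into the integrals and reduce both claims to the first-order conditions of \cref{th:FOC}. Let $P$ maximize \cref{contobjective} with $P_\alpha=\nu$. By \cref{cor:logitisoptimal} and \cref{lem:propertiesoff}, the potentials are $a_\nu=0$ and $b_\nu=\log Z(\cdot;\nu)$. The preceding theorem then gives, for $\nu$-a.e. $\alpha'$,
\begin{align*}
    dP(\omega|\alpha')=\frac{e^{u(\alpha',\omega)/\lambda}}{Z(\omega;\nu)}\ d\mu(\omega).
\end{align*}
Fix such an $\alpha'$, chosen from a single $\nu$-full set that works for all $\alpha$ simultaneously. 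Substituting this density, the factor $e^{-u(\alpha',\omega)/\lambda}$ cancels against $e^{u(\alpha',\omega)/\lambda}$. This cancellation is legitimate wherever $u(\alpha',\omega)>-\infty$, and on the set where $u(\alpha',\omega)=-\infty$ the conditional density vanishes, so that set contributes nothing. We obtain
\begin{align*}
    \int \exp\pare{\frac{u(\alpha,\omega)-u(\alpha',\omega)}{\lambda}}\ dP(\omega|\alpha')=\int \frac{e^{u(\alpha,\omega)/\lambda}}{Z(\omega;\nu)}\ d\mu(\omega).
\end{align*}
The right-hand side does not depend on $\alpha'$. It is exactly the function whose plateau property is asserted in \cref{th:FOC}.

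For the first claim, I will check that \cref{th:FOC} applies. Since $P$ is optimal, $\nu$ maximizes $V$, and hence maximizes $f$ by the Matějka--McKay corollary; in particular $f$ attains its maximum. Inequality \cref{FOCleq}, valid for \emph{every} $\alpha\in A$, then gives that the common value is at most $1$, which is the first inequality. The quantifier order needs some care: the $\nu$-null exceptional set of $\alpha'$ comes only from the version of the disintegration in the likelihood ratio theorem. So it is independent of $\alpha$, and the statement "for every $\alpha$, for $\nu$-a.e. $\alpha'$" holds, in fact with a uniform null set. The regularity convention \cref{densityregularity} lets us fix one version of the kernel everywhere, which removes any remaining ambiguity.

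For the second claim, restrict $\alpha$ to the $\nu$-full set on which \cref{FOC} holds with equality, which is equivalently the support condition of \cref{cor:support}. For such $\alpha$ and for $\nu$-a.e. $\alpha'$, the displayed identity gives exactly $1$. The main obstacle is the measure-theoretic bookkeeping. We must make sure that $Z(\omega;\nu)\in(0,\infty)$ for $\mu$-a.e. $\omega$, so that the density is well defined; this follows from $b_\nu\in L^1(\mu)$. We must also handle points where $u(\alpha',\omega)=-\infty$ so that the exponent is not of the form $\infty-\infty$. I would deal with the latter by adopting the convention that the integrand is integrated only against $dP(\omega|\alpha')$, which assigns zero mass to that set.
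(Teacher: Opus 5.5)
Your route is the paper's: substitute the invariant-likelihood-ratio density $dP(\omega|\alpha')=e^{u(\alpha',\omega)/\lambda}Z(\omega;\nu)^{-1}\,d\mu(\omega)$, cancel, and read both claims off the plateau property in \cref{th:FOC}. Like the paper's proof, you read the second display with $dP(\omega|\alpha')$. That is the right reading: with $dP(\omega|\alpha)$ as printed, the integral becomes $\int e^{(2u(\alpha,\omega)-u(\alpha',\omega))/\lambda}Z(\omega;\nu)^{-1}\,d\mu(\omega)$, which is not $1$ in general.

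The step that fails as written is your check that \cref{th:FOC} applies. You conclude that $\nu$ maximizes $f$ ``by the Mat\v{e}jka--McKay corollary; in particular $f$ attains its maximum.'' But that corollary takes ``$f$ has a maximum in $\Delta A$'' as a hypothesis, so the inference is circular. The paper's proof invokes \cref{th:FOC} without checking this hypothesis at all, even though the ILR corollary assumes only that $P$ is optimal, so the hole is shared. Your write-up, however, presents it as closed.

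The gap affects only the inequality off the support. The equality part needs nothing extra: by \cref{lem:propertiesoff}, $a_\nu=0$ and $b_\nu=\log Z(\cdot;\nu)$. The Schr\"odinger equation then gives $\int e^{u(\alpha,\omega)/\lambda}Z(\omega;\nu)^{-1}\,d\mu(\omega)=e^{a_\nu(\alpha)}=1$ for $\nu$-a.e.\ $\alpha$.

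For \cref{FOCleq} at every $\alpha$, you can either add the hypothesis that $f$ attains its maximum (e.g.\ $A$ compact and $u(\cdot,\omega)$ upper semicontinuous), or show directly that $\nu$ maximizes $f$. For the direct route:
\begin{enumerate}
\item Take any $\nu'$ with $f(\nu')>-\infty$ and the logit policy $dP'/d(\nu'\otimes\mu)=e^{u/\lambda}/Z(\cdot;\nu')$. It is Bayes plausible, and its action marginal $\tilde\nu$ has density $\alpha\mapsto\int e^{u(\alpha,\omega)/\lambda}Z(\omega;\nu')^{-1}\,d\mu(\omega)$ with respect to $\nu'$.
\item Expanding $\log\frac{dP'}{d(\tilde\nu\otimes\mu)}$ shows that the value of $P'$ in \cref{contobjective} is $f(\nu')+D_{KL}(\tilde\nu\|\nu')\ge f(\nu')$. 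This is the same computation that makes the Blahut--Arimoto iteration monotone.
\item By optimality of $P$, this value is at most the value of $P$, which is $V(\nu)=f(\nu)$ by \cref{cor:fequalsV}.
\end{enumerate}
Hence $f(\nu')\le f(\nu)$ for every $\nu'$, so $f$ attains its maximum at $\nu$. Then \cref{th:FOC} legitimately yields \cref{FOCleq} for every $\alpha$, which is your first claim.
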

\begin{proof}
    Fix $\alpha \in A$. By \cref{th:FOC}, for $\nu$-a.e. $\alpha'\in A$,
    \begin{align*}
        \int \exp\pare{\frac{u(\alpha,\omega)-u(\alpha',\omega)}{\lambda}}\ dP(\omega|\alpha')&=\int \exp\pare{\frac{u(\alpha,\omega)-u(\alpha',\omega)}{\lambda}}\ \frac{e^{u(\alpha',\omega)/\lambda}}{Z(\omega;P_\alpha)}\ d\mu(\omega)\\
        &=\int \frac{e^{u(\alpha,\omega)/\lambda}}{Z(\omega;P_\alpha)}\ d\mu(\omega)\leq 1
    \end{align*}
    with equality for $\nu$-a.e. $\alpha$.
\end{proof}

CDL use the invariant likelihood ratios (ILR) to construct a belief-based characterization of consideration sets. 
For any $\overline \alpha \in B\subset A$ and any posterior belief $P(\omega|\overline \alpha)$, one can use invariant likelihood ratios to define the remaining posteriors for signals in $B$ by
\begin{align*}
    P(S|\alpha)=\int_S\exp\pare{\frac{u(\alpha,\omega)-u(\overline \alpha,\omega)}{\lambda}}\ dP(\omega|\overline \alpha)
\end{align*}
A closed set $B$ is then a consideration set, given $P(\omega|\overline\alpha)$, only if there exists $\nu\in\Delta A$ such that $\text{supp }\nu=B$ and
\begin{align*}
    \mu(S)=\iint _S\exp\pare{\frac{u(\alpha,\omega)-u(\overline \alpha,\omega)}{\lambda}}\ dP(\omega|\overline \alpha)\ d\nu(\alpha)\equiv \int P(S|\alpha)\ d\nu(\alpha)
\end{align*}
for every measurable $S\subset \Omega$. That is, it is possible to weight the posteriors such that it averages to the prior. They state this as $\mu$ lies in the interior of the convex hull of 
\begin{align*}
    \bigg\{m\in \Delta \Omega: dm=\exp\pare{\frac{u(\alpha,\omega)-u(\overline \alpha,\omega)}{\lambda}}\ dP(\omega|\overline \alpha)\text{ for some }\alpha \in B\bigg\}
\end{align*}

In essence, given any consideration set $B$ or cardinality $k$, for any $\overline\alpha\in B$ the other actions in the consideration set provide $k-1$ linear restrictions on the posterior $P(\omega|\overline \alpha)$. If the $k-1$ restrictions are linearly independent and the simplex $\Delta \Omega$ is $(k-1)$-dimensional, then for any $\overline \alpha\in B$, this pins down a unique posterior $P(\omega|\overline \alpha)$. This posterior, in turn, pins down the set of all posteriors $\{P(\omega|\alpha):\alpha \in B\}$ via the ILRs.

\section{Schr\"odinger Potentials}

The most economically useful objects introduced by the Schr\"odinger Bridge approach are the Schr\"odinger potentials, which are the infinite-dimensional analogs of Lagrange multipliers in entropic optimal transport. We highlight several ways in which the potentials can help us characterize and better understand the optimal information policy.

\subsection{Potentials Under Optimal Information Policies}

The role and nature of the potentials under the optimal information policy $P$ with optimal marginal $P_\alpha=\nu$ is given by \cref{cor:addsep,lem:propertiesoff}. To recap, $V(\nu)$ is additively separable, even if $\nu$ is not optimal,
\begin{align*}
    V(\nu)=\int a_\nu \ d\nu + \int b_\nu \ d\mu
\end{align*}
and when $\nu$ is optimal, $a_\nu$ is constant $\nu$-a.s. Our convention is to normalize $\dis \int a_\nu\ d\nu=0$.\\

Schr\"odinger duality is analogous to the familiar \textit{Kantorovich duality} from vanilla optimal transport. The additive separability of the objective function when evaluated at a Schr\"odinger bridge implies that the bridge exhausts all gains that can be made by matching states with actions, subject to the information cost. We are left with two terms, which summarize the direction in which the agent would like to change $\nu$ and $\mu$, respectively. The agent cannot choose $\mu$, but he can choose the optimal bridgehead $\nu$, and therefore, if the agent is maximizing the objective, then the associated Schr\"odinger potential should be constant---and without loss, zero.\\

This direct interpretation of the Schr\"odinger potentials becomes easy to see in the finite case---highlighting that the economic relevance of the potentials is certainly not limited to the continuous case. Here, a simple application of the envelope theorem reveals that the Schr\"odinger potential at a given point is the infinitessimal change in the objective induced by shifting an infinitessimal amount of probability mass away from all other points and towards that given point---i.e., the directional derivative in the direction of the Dirac measure at that point.

\begin{theorem}\label{th:potentials}
    If $A$ is finite and $\nu(\alpha)>0$ for all $\alpha \in A$, then the change in $V$ from re-assigning mass to the point $\alpha^* \in A$ is
    \begin{align*}
        \dd_{\delta_{\alpha^*}}V(\nu)=a_\nu(\alpha)-E_\nu[a_\nu]
    \end{align*}
    $\nu$-a.s., where $\delta_{\alpha^*}$ is the Dirac probability mass at $\alpha^*$. This is the Gateaux derivative of $V$ in the direction of $\delta_{\alpha^*}$. Similarly, if $\Omega$ is finite, then the Gateaux derivative of $V$ as $\mu$ is shifted in the direction of $\delta_{\omega^*}$ is
    \begin{align*}
        \dd_{\delta_{\omega^*}} V(\nu)=b_\nu(\omega^*)-E_\mu[b_\nu]
    \end{align*}
\end{theorem}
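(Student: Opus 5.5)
The plan is an envelope-theorem argument built on the duality formula and additive separability (\cref{cor:addsep}). Fix $\alpha^*\in A$ and set $\nu_\varepsilon=(1-\varepsilon)\nu+\varepsilon\delta_{\alpha^*}=\nu+\varepsilon(\delta_{\alpha^*}-\nu)$. Since $A$ is finite and $\nu(\alpha)>0$ for every $\alpha$, the perturbed marginal $\nu_\varepsilon$ also has full support for $\varepsilon$ small (even slightly negative). Hence $L^1(\nu_\varepsilon)$ is just $\R^A$, and the dual problem for $\nu_\varepsilon$ ranges over the same set of potentials as for $\nu$.

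Write the dual functional as
\begin{align*}
\Phi(\nu;a,b)=\sum_\alpha a(\alpha)\nu(\alpha)+\int b\,d\mu+\sum_\alpha\nu(\alpha)\int e^{\frac{u}{\lambda}-a(\alpha)-b}\,d\mu-1 .
\end{align*}
For fixed $(a,b)$ this is affine in $\nu$, and the duality theorem gives $V(\nu)=\inf_{a,b}\Phi(\nu;a,b)$. The infimum is attained at the Schr\"odinger potentials $(a_\nu,b_\nu)$, where the exponential term equals $1$ by \cref{Schr\"odingereq}.

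The argument then proceeds in three steps.

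\emph{Upper bound.} Plugging $(a_\nu,b_\nu)$ into the dual for $\nu_\varepsilon$ gives
\begin{align*}
V(\nu_\varepsilon)\le \Phi(\nu_\varepsilon;a_\nu,b_\nu)=V(\nu)+\varepsilon\Big[a_\nu(\alpha^*)+e^{a_\nu(\alpha^*)}e^{-a_\nu(\alpha^*)}-\textstyle\sum_\alpha \nu(\alpha)\big(a_\nu(\alpha)+1\big)\Big].
\end{align*}
Here the Schr\"odinger equation gives $\int e^{u(\alpha,\cdot)/\lambda-a_\nu(\alpha)-b_\nu}\,d\mu=1$ for every $\alpha$. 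The bracket therefore simplifies to $a_\nu(\alpha^*)-E_\nu[a_\nu]$. Dividing by $\varepsilon>0$ and by $\varepsilon<0$ (the latter is allowed by full support) gives one-sided bounds on the two difference quotients.

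\emph{Matching bound.} Plugging the potentials $(a_{\nu_\varepsilon},b_{\nu_\varepsilon})$ into the dual for $\nu$ gives the reverse inequality
\begin{align*}
V(\nu)\le V(\nu_\varepsilon)-\varepsilon\big[a_{\nu_\varepsilon}(\alpha^*)-E_\nu[a_{\nu_\varepsilon}]\big].
\end{align*}
Deriving it uses the same computation with the roles of the two marginals swapped. Combining the two displays sandwiches the difference quotient between $a_\nu(\alpha^*)-E_\nu[a_\nu]$ and $a_{\nu_\varepsilon}(\alpha^*)-E_\nu[a_{\nu_\varepsilon}]$.

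\emph{Continuity of potentials.} It then suffices to show $a_{\nu_\varepsilon}-E_\nu[a_{\nu_\varepsilon}]\to a_\nu-E_\nu[a_\nu]$ as $\varepsilon\to0$. Centering removes the translation ambiguity, so this is a well-posed statement. I expect this to be the main obstacle. My first attempt would be uniqueness plus compactness in finite dimensions:
\begin{itemize}
\item With $a$ normalized, the Schr\"odinger equations give $e^{b(\omega)}=\sum_\alpha e^{u(\alpha,\omega)/\lambda-a(\alpha)}\nu_\varepsilon(\alpha)$, so $b$ is determined by the finite vector $a$.
\item The equation for $a$ then reads $e^{a(\alpha)}=\int e^{u(\alpha,\cdot)/\lambda}\big/\sum_{\alpha'}e^{u(\alpha',\cdot)/\lambda-a(\alpha')}\nu_\varepsilon(\alpha')\,d\mu$. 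This shows the centered $a_{\nu_\varepsilon}$ stay bounded, since $\nu_\varepsilon(\alpha')$ is bounded below.
\item Any limit point solves the $\nu$-equations, so by uniqueness up to translation (\cref{propertiesofEOT}) it equals the centered $a_\nu$.
\end{itemize}
An alternative route avoids continuity altogether: $V$ is concave in $\nu$, being an infimum of affine functions, and the sandwich shows $a_\nu(\alpha^*)-E_\nu[a_\nu]$ is a supergradient direction at an interior point. Uniqueness of potentials makes the supergradient unique along this direction, and this also yields the derivative.

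The statement for $\Omega$ is the same argument with the roles swapped. Perturb $\mu_\varepsilon=\mu+\varepsilon(\delta_{\omega^*}-\mu)$; this stays valid for small $\varepsilon$ if $\mu$ has full support on the finite $\Omega$, which is implicitly assumed. Then $V$, viewed as a function of $\mu$, has Gateaux derivative $b_\nu(\omega^*)-E_\mu[b_\nu]$, because the dual is symmetric in $(a,\nu)$ and $(b,\mu)$.
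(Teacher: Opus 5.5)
Your proposal is correct and takes the same route as the paper. The paper applies the envelope theorem to the duality formula and uses $\int e^{u(\alpha,\cdot)/\lambda-a_\nu(\alpha)-b_\nu}\,d\mu=1$ to collapse the derivative to $a_\nu(\alpha^*)-E_\nu[a_\nu]$. Your two-sided sandwich plus continuity of the centered potentials is exactly the justification of that envelope step which the paper leaves implicit. Your boundedness claim holds, but it needs a short contradiction argument via dominated convergence rather than a direct estimate, since $u$ may be unbounded below.

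Your aside that concavity lets you avoid continuity altogether is not right as stated. Uniqueness of the dual minimizer does not by itself give a unique supergradient of $V$: an infimum of affine functions over a non-compact parameter set can have a unique minimizer at a kink. That route therefore still needs the step showing that limits of nearby potentials solve the Schr\"odinger equations for $\nu$.
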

\begin{proof}
    First, let $A$ be finite. Applying the Envelope theorem to the duality formula, we obtain
    \begin{align*}
        \dd_{\delta_{\alpha^*}}V(\nu)&=\sum_{\alpha'\in A}\pare{a_\nu(\alpha)+\int e^{\frac{u(\alpha,\omega)}{\lambda}-a_\nu(\alpha)-b_\nu(\omega)}\ d\mu(\omega)}[\delta_{\alpha^*}(\alpha)-\nu(\alpha)] \\&=\sum_{\alpha'\in A}\pare{a_\nu(\alpha)+1}[\delta_{\alpha^*}(\alpha)-\nu(\alpha)] \\
        &=a_\nu(\alpha^*)-E_\nu[a_\nu]
    \end{align*}
    from whence the first result follows. The second result follows from a similar derivation.
\end{proof}

\subsection{Statistical Mechanical Connections}

Although not of paramount economic importance, the obvious connections to statistical mechanics are worth pausing on and outlining in some detail. The fact that the action potential is constant implies that if $\nu$ maximizes $V$ we get (after normalizing $E_\nu[a_\nu]=0$)
\begin{align*}
    b_\nu(\omega)=\log Z(\omega;\nu)=\log\pare{\int e^{u(\alpha,\omega)/\lambda}\ d\nu(\alpha)}
\end{align*}
In statistical mechanics, the log of the partition function is an important object which encodes the moments of the system. The same principle applies here, but in an economic context.
\begin{theorem}[Cumulant Generation] Let $P$ maximize \cref{contobjective} such that $a(\alpha)=0$, $b(\omega)=\log Z(\omega;P_\alpha)$ are the associated potentials. The $\omega$-conditional moments of utility are given by
\begin{align*}
    E_P[u(\alpha,\omega)|\omega]=\frac{\dd b(\omega)}{\dd (1/\lambda)} && \Var(u(\alpha,\omega)|\omega)=\frac{\dd^2 b(\omega)}{\dd (1/\lambda)^2}
\end{align*}
The information gain from $P(\alpha|\omega)$ over $P(\alpha)$ in state $\omega$ is
\begin{align*}
    D_{KL}(P(\alpha|\omega)\|P(\alpha))=E_P\brac{\log\pare{\frac{dP(\alpha|\omega)}{dP(\alpha)}}\ | \ \omega}=b(\omega)-\frac1\lambda \frac{\dd b(\omega)}{\dd (1/\lambda)}=\frac{\dd [\lambda b(\omega)]}{\dd \lambda}
\end{align*}
\end{theorem}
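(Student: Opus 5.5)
Throughout, write $\beta=1/\lambda$. We hold the bridgehead $\nu=P_\alpha$ fixed when differentiating, so that
\begin{align*}
b(\omega)=b(\omega;\beta)=\log\pare{\int e^{\beta u(\alpha,\omega)}\ d\nu(\alpha)}
\end{align*}
is, for each $\omega$, a cumulant generating function in $\beta$. This is the same $b$ as in \cref{lem:propertiesoff}. By \cref{cor:logitisoptimal} and \cref{propertiesofEOT} with $a=0$, the conditional law is
\begin{align*}
\frac{dP(\alpha|\omega)}{d\nu(\alpha)}=e^{\beta u(\alpha,\omega)-b(\omega;\beta)}.
\end{align*}
The plan has three steps: the two moment identities by differentiation under the integral sign, the KL term by evaluating the log-likelihood ratio under this Gibbs form, and a final chain-rule step to pass from $\beta$ to $\lambda$.

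\textbf{Moments.} Since $u\le\overline u$, for $\beta$ in a compact subinterval of $(0,\infty)$ the integrands $|u|^k e^{\beta u}$ ($k=1,2$) should be dominated by integrable functions of $\alpha$ under $\nu$. This holds wherever $\int e^{\beta' u}\,d\nu<\infty$ for nearby $\beta'$, which is automatic from $u\le \overline u$ together with $u>-\infty$ and the elementary bound $|u|e^{\beta u}\le C_\epsilon e^{(\beta-\epsilon)u}$ for $u\le 0$. Given this, differentiation under the integral gives
\begin{align*}
\partial_\beta b=\frac{\int u\,e^{\beta u}\,d\nu}{\int e^{\beta u}\,d\nu}=E_P[u\mid\omega].
\end{align*}
Differentiating once more gives the quotient-rule expression $E_P[u^2\mid\omega]-E_P[u\mid\omega]^2=\Var(u\mid\omega)$. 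This is the standard cumulant-generating property, and it delivers the first display.

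\textbf{Information term.} I would evaluate $E_P\brac{\log\frac{dP(\alpha|\omega)}{dP(\alpha)}\mid\omega}$ by substituting the Gibbs form of the likelihood ratio and integrating against $P(\cdot|\omega)$. The result is an affine combination of $b(\omega)$ and $\beta\,E_P[u\mid\omega]=\beta\,\partial_\beta b(\omega)$, with coefficients fixed by the sign conventions of $\frac{dP(\alpha|\omega)}{dP(\alpha)}$ and of the potentials in \cref{structure}. The aim is to arrive at the stated combination $b-\beta\,\partial_\beta b = b-\frac1\lambda\frac{\dd b}{\dd(1/\lambda)}$.

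\textbf{Chain rule.} The last equality in the statement is pure calculus. From $\frac{\dd}{\dd\lambda}=-\beta^2\frac{\dd}{\dd\beta}$ we get
\begin{align*}
\frac{\dd[\lambda b]}{\dd\lambda}=b+\lambda\frac{\dd b}{\dd\lambda}=b-\lambda\beta^2\,\partial_\beta b=b-\beta\,\partial_\beta b,
\end{align*}
which matches the middle expression.

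\textbf{Main obstacle.} I expect the hardest step to be the information term, and specifically the bookkeeping of signs. Plugging the Gibbs form in naively gives $\beta E_P[u\mid\omega]-b$. Matching the stated $b-\beta E_P[u\mid\omega]$ therefore requires care about which way the likelihood ratio $\frac{dP(\alpha|\omega)}{dP(\alpha)}$ is oriented and which normalization of $(a,b)$ is used. I would first redo the computation directly from \cref{structure} with $a_\nu=0$, tracking every sign, before asserting the final identity. If the computation still yields the opposite orientation, then the stated formula holds only up to a sign, and I would flag this rather than force the match. The secondary technical point is justifying differentiation under the integral sign in $\beta$, which the dominated-convergence argument above should handle.
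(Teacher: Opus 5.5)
The paper gives no proof of this theorem (it is ``left as an exercise''), so the only comparison is with the intended direct computation. Your moment and chain-rule steps are exactly that computation, and they are correct. Holding $\nu=P_\alpha$ fixed is the right reading of $\dd b/\dd(1/\lambda)$. If $\nu$ were re-optimized as $\lambda$ moves, the pointwise derivative of $b(\omega)$ would pick up an extra term from the $\lambda$-dependence of the optimal $\nu$, and that term does not vanish state by state. Your domination argument also works, and it can be shortened: for $\beta$ in a compact subinterval of $(0,\infty)$ and $u\le\overline u$, each $|u|^k e^{\beta u}$ ($k=1,2$) is bounded by a constant, which suffices because $\nu$ is a probability measure.

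The gap is in the information term: you leave it conditional, and what you call the main obstacle is not a bookkeeping issue. The statement is false by a sign, and no convention can rescue it. The statement itself fixes the orientation: it is the log of $\frac{dP(\alpha|\omega)}{dP(\alpha)}$ integrated against $P(\cdot|\omega)$, i.e.\ $D_{KL}(P(\cdot|\omega)\|\nu)$. It also fixes the normalization, $a=0$ and $b=\log Z$. Hence $\log\frac{dP(\alpha|\omega)}{d\nu(\alpha)}=\beta u(\alpha,\omega)-b(\omega)$ identically, and your ``naive'' answer is the only one:
\[
D_{KL}\pare{P(\alpha|\omega)\|P(\alpha)}=\beta\,\dd_\beta b(\omega)-b(\omega)=\frac1\lambda\frac{\dd b(\omega)}{\dd(1/\lambda)}-b(\omega)=-\frac{\dd[\lambda b(\omega)]}{\dd\lambda}.
\]
To see that the stated version cannot hold, note that $\beta\mapsto b(\omega;\beta)=\log\int e^{\beta u}\,d\nu$ is convex on $[0,\infty)$, with $b(\omega;0)=0$ because $\nu$ is a probability measure. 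Evaluating the tangent line at $\beta$ at the point $0$ gives $b-\beta\,\dd_\beta b\le 0$. Equality holds only if $b$ is affine on $[0,\beta]$, i.e.\ only if $u(\cdot,\omega)$ is $\nu$-a.s.\ constant, equivalently $P(\cdot|\omega)=\nu$. So the paper's $b-\frac1\lambda\frac{\dd b}{\dd(1/\lambda)}=\frac{\dd[\lambda b]}{\dd\lambda}$ is nonpositive and equals $-D_{KL}$. It is the thermodynamic entropy $-\dd F/\dd\lambda$ with $F=-\lambda b$, i.e.\ the Gibbs entropy relative to the reference measure $\nu$, which the paper conflates with relative entropy. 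The same sign error reappears in the subsequent corollary, where $H_P/\lambda+\log Z=-D_{KL}$. Your write-up should prove the two moment identities as stated and state the corrected identity above outright, rather than aiming at the stated combination.
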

The proofs are straightforward and left as an exercise to the reader. We dive straight into the interpretation: recall from \cref{th:potentials} that $b(\omega)$ is the willigness-to-pay, in $\lambda$-normalized utils, to marginally increase the probability of the state $\omega$. The first moment condition states that the rationally inattentive agent's willingness-to-pay changes as the information cost goes down, and it changes commensurate to the expected utility in state $\omega$, given the optimal information policy $P$. We can think of $\lambda$ as a conversion factor between utils and bits of information. Accordingly, the second moment condition states that the agent's willingness-to-pay (in bits) for a marginal increase in the probability of state $\omega$ goes down as the exchange rate between bits and utils worsens.\\

From the preceding theorem follows the free energy minimization interpretation of the rational inattention problem. The analogous thermodynamic objects are as follows: $\lambda$ is the temperature, $-u$ is the energy, the information gain is the entropy, and the objective \cref{contobjective} minimizes the average free energy.
\begin{corollary}
    Define the relative entropy in state $\omega$ under $Q$ by
    \begin{align*}
        S_Q(\omega)=D_{KL}(Q(\alpha|\omega)\|Q(\alpha))=E_{\alpha \sim Q_\alpha}\brac{\pare{\frac{dQ(\alpha|\omega)}{dQ(\alpha)}}\log \pare{\frac{dQ(\alpha|\omega)}{dQ(\alpha)}}}
    \end{align*}
     and define $H_Q(\omega)=-E_Q[u(\alpha,\omega)|\omega]$ as the negative of the $\omega$-conditional expected utility under $Q$. Then, 
    \begin{align*}
        S_P(\omega)=\frac{H_P(\omega)}{\lambda}+\log Z(\omega;P_\alpha)\equiv \frac{\dd [\lambda \log Z(\omega;P_\alpha)] }{\dd \lambda}
    \end{align*}
    Finally, define the average free energy $G_Q(\omega)$ by
    \begin{align*}
        G_Q(\omega)= H_Q(\omega)-\lambda S_Q(\omega)
    \end{align*}
    Then, if $P$ maximizes the objective in \cref{contobjective} then $P$ minimizes average free energy.
\end{corollary}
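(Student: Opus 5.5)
The plan is to obtain the entropy identity directly from the Cumulant Generation theorem, and then to reduce the free-energy claim to the fact that $P$ maximizes \cref{contobjective}. Throughout I follow the sign conventions fixed by the Cumulant Generation theorem; I have not independently verified that those conventions agree with the literal definitions of $S_Q$ and $H_Q$ in the corollary.

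\emph{Step 1 (entropy identity).} Let $P$ be optimal with potentials $a=0$ and $b(\omega)=\log Z(\omega;P_\alpha)$, which exist by \cref{lem:propertiesoff}. The Cumulant Generation theorem gives
\begin{align*}
S_P(\omega)=D_{KL}(P(\alpha|\omega)\|P(\alpha))=b(\omega)-\frac1\lambda\frac{\dd b(\omega)}{\dd(1/\lambda)}=\frac{\dd[\lambda b(\omega)]}{\dd\lambda}.
\end{align*}
The same theorem gives $\frac{\dd b}{\dd(1/\lambda)}=E_P[u|\omega]=-H_P(\omega)$. Substituting this into the middle expression yields $S_P(\omega)=\frac{H_P(\omega)}{\lambda}+\log Z(\omega;P_\alpha)$. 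The identification with $\frac{\dd[\lambda\log Z]}{\dd\lambda}$ is the product rule together with $\frac{\dd}{\dd\lambda}=-\lambda^{-2}\frac{\dd}{\dd(1/\lambda)}$. Hence this step is pure bookkeeping from results already stated.

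\emph{Step 2 (value of free energy at $P$).} Plug Step 1 into the definition of $G_P$:
\begin{align*}
G_P(\omega)=H_P(\omega)-\lambda S_P(\omega)=-\lambda\log Z(\omega;P_\alpha)=-\lambda b(\omega).
\end{align*}
Averaging over $\mu$ and using \cref{lem:propertiesoff} together with \cref{cor:fequalsV}, I get
\begin{align*}
\int G_P\,d\mu=-\lambda f(P_\alpha)=-\lambda V(P_\alpha)=-\lambda\sup(\text{\cref{contobjective}}).
\end{align*}

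\emph{Step 3 (minimality).} For an arbitrary feasible $Q$ with $Q_\omega=\mu$, write $\int G_Q\,d\mu$ in terms of the objective \cref{contobjective} evaluated at $Q$. Two facts do this: $\int H_Q\,d\mu=-\int u\,dQ$, and $\int S_Q\,d\mu=D_{KL}(Q\|Q_\alpha\otimes\mu)$, the mutual information. Under the conventions of Step 1, in which the entropy term enters the free energy with the sign that makes $G_P=-\lambda b$, this gives $\int G_Q\,d\mu=-\lambda\cdot(\text{objective at }Q)$. Since $P$ attains the supremum, $\int G_Q\,d\mu\ge\int G_P\,d\mu$. If $Q$ has infinite information, the inequality is trivial.

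The main obstacle is Step 3. The task there is to keep the sign convention of the entropy term exactly as dictated by the Cumulant Generation identity, so that the functional being minimized is precisely $-\lambda$ times \cref{contobjective}, and not a different combination of $H_Q$ and $S_Q$. I would first check the identity $G_P=-\lambda b$ from Step 2 against that theorem, and then carry the same convention over verbatim to general $Q$.
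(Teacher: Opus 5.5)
Your outline is the derivation the paper intends; the paper gives no separate proof, only ``follows from the preceding theorem.'' As written, though, it does not go through. Step~1 imports a sign error, and Step~3 then asserts an identity that is false with the definitions you state.

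Compute $S_P$ directly from the logit form instead of through the Cumulant Generation theorem. By \cref{cor:logitisoptimal} and \cref{densityregularity}, $\log\frac{dP(\alpha|\omega)}{dP(\alpha)}=\frac{u(\alpha,\omega)}{\lambda}-\log Z(\omega;P_\alpha)$ holds $P$-a.s., hence
\begin{align*}
S_P(\omega)=\frac{E_P[u(\alpha,\omega)|\omega]}{\lambda}-\log Z(\omega;P_\alpha)=-\pare{\frac{H_P(\omega)}{\lambda}+\log Z(\omega;P_\alpha)}=-\frac{\dd[\lambda\log Z(\omega;P_\alpha)]}{\dd\lambda}.
\end{align*}
So the KL formula in the Cumulant Generation theorem is off by a sign; the correct one is $\frac1\lambda\frac{\dd b}{\dd(1/\lambda)}-b$. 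The first equality in the statement is off by the same sign (the $\equiv$ part is fine): its right-hand side is $\le 0$, while $S_P\ge 0$.

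Step~3 breaks for the same reason. With $G_Q=H_Q-\lambda S_Q$ and your two facts, which are correct, you get $\int G_Q\,d\mu=-\int u\,dQ-\lambda D_{KL}(Q\|Q_\alpha\otimes\mu)$. This rewards information rather than penalizing it. No choice of ``convention'' makes it equal $-\lambda$ times the objective unless $Q$ is uninformative. Your remark that infinite-information $Q$ are trivial is a symptom of this: under the literal definitions, such $Q$ (with $\int u\,dQ>-\infty$) have $\int G_Q\,d\mu=-\infty$.

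The literal claim is in fact false. Take $A=\Omega=\{0,1\}$, $\mu$ uniform, and $u(\alpha,\omega)=\mathbf 1\{\alpha=\omega\}$. The optimal $P$ is the logit with uniform marginal, so $E_P[u]<1$ and $D_{KL}(P\|P_\alpha\otimes\mu)<\log 2$. The fully informative coupling therefore has strictly smaller literal average free energy.

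The repair is to take $S_Q:=-D_{KL}(Q(\alpha|\omega)\|Q(\alpha))$, the Gibbs entropy relative to the reference measure $Q_\alpha$, which is the genuine thermodynamic analogue. Equivalently, keep $S_Q$ as defined and set $G_Q=H_Q+\lambda S_Q$. Then:
\begin{itemize}
\item the displayed identity is exactly the computation above;
\item $G_P=-\lambda\log Z(\omega;P_\alpha)$, as in your Step~2;
\item for every Bayes-plausible $Q$,
\end{itemize}
\begin{align*}
\int G_Q\,d\mu=-\int u\,dQ+\lambda D_{KL}(Q\|Q_\alpha\otimes\mu)=-\lambda J(Q),
\end{align*}
where $J$ denotes the objective in \cref{contobjective}. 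Minimality of $P$ is then immediate from its optimality. Your Step~2, via \cref{lem:propertiesoff} and \cref{cor:fequalsV}, then only identifies the minimal value as $-\lambda f(P_\alpha)$.

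The check you proposed, testing $G_P=-\lambda b$ against the Cumulant Generation theorem, could not have caught this, because that theorem carries the same slip. The check has to be made against the logit density itself.
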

Where our system uniquely departs is that we can define the thermodynamic quantities at each point $\omega \in \Omega$. A closer examination shows that it is a minor miracle that these thermodynamic connections appear at all---because it cannot be said from simply looking at the objective \cref{contobjective} that the agent chooses $P(\alpha|\omega)$ so as to minimize $G_P(\omega)$ in \textit{that} state specifically. The reason is that by changing $P(\alpha|\omega)$, one also changes $P(\alpha)$---a term which shows up in all other states. \\ 

The magic lies in the functional form. As seen in \cref{gibbsproperty}, a consequence of the log functional form of Shannon information
\begin{align*}
    \int \log\pare{\frac{dP}{d(P_\alpha\otimes \mu)}}\ dP
\end{align*}
is that even though $P$ appears in both the integrand and the integrator $dP$, the only first-order effect is through the integrator. That is, when analyzing local perturbations in $P$, we can ignore the effect on the integrand. This is an instance of the variational results commonly named for Gibbs, Donsker, and Varadhan.\\

Because the marginal $P_\alpha$ only shows up in the integrand, changing $P(\alpha|\omega)$ does not have a first-order affect on the mutual information through $P(\alpha)$. Locally, the agent can behave as though he is maximizing $u(\alpha,\omega)-D_{KL}(P(\alpha|\omega)\|P(\alpha))$ at every state $\omega$---or conversely, minimizing the free energy at every state.

\subsection{A Second Look at the Normalization Condition}
Recall the definition of $f$ in the discrete case
\begin{align*}
    f(\nu)=\int \log\pare{\sum_{\alpha \in A}e^{u(\alpha,\omega)/\lambda}\ \nu(\alpha) }\ d\mu(\omega)
\end{align*}
and the Kuhn-Tucker condition for its optimality, found in \cref{th:FOC}:
\begin{align}\label{eq:CDLKT}
    \int \frac{e^{u(\alpha,\omega)/\lambda}}{\sum_{\alpha'\in A} e^{u(\alpha'\omega)/\lambda}\ \nu(\alpha') }-1\ d\mu(\omega)\leq 0 && \text{everywhere, with equality $\nu$-a.e.}
\end{align}
As MM and CDL recognized early on, $f$ is an important object in rational inattention because it is a straightforward function of $\nu$ whose argmax coincides with the optimal marginals. However, $f$ lacks a clear economic interpretation. MM understood the equality in \cref{eq:CDLKT} as a ``normalization condition''; that is, when one plugs in the multinomial logit \cref{eq:MMMNL}, a necessary condition is that $\int P(\alpha|\omega)-1\ d\mu(\omega)$ must sum to $0$ a.s. The main result of CDL is that \cref{eq:CDLKT} is the Kuhn-Tucker condition for the maximization of $f$, and because $f$ is concave, Kuhn-Tucker is not only necessary, but also sufficient. CDL explain the Kuhn-Tucker conditions as a fixed-point of the Blahut-Arimoto algorithm (itself a complicated subject, which we address next), which increases $f$ with every iteration. \\

But this is about as far as we get to interpreting $f$ and its Kuhn-Tucker condition---as a fixed point and as a normalization condition. The Schr\"odinger potentials offer a second look at the esoteric conditions in \cref{th:FOC}. From hereon, let us specify the Schr\"odinger potential to be the functions which satisfy the Schr\"odinger equations exactly, so that: 
\begin{align*}
    a_\nu(\alpha)=\log\pare{\int e^{u(\alpha,\omega)-b(\omega)}\ d\mu(\omega)} 
\end{align*}
everywhere, not just $\nu$-a.e.\footnote{In the mathematical literature, the Schr\"odinger potentials are typically allowed to differ on null sets---in those cases, it usually makes no difference.} The main result to follow is a simplification of \cref{eq:CDLKT}.
\begin{theorem}
    Normalize $E_\nu[a_\nu]=0$. \Cref{eq:CDLKT} holds if and only if: 
    \begin{align}\label{eq:VKT}
    a_\nu(\alpha)\leq 0 && \text{everywhere, with equality $\nu$-a.e.}
\end{align}
i.e., $a_\nu$ is a plateau function.
\end{theorem}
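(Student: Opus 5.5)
The plan is to show that, under the convention fixed just before the statement (the action potential is defined \emph{everywhere} by the Schr\"odinger equation $a_\nu(\alpha)=\log\int e^{u(\alpha,\omega)/\lambda-b_\nu(\omega)}\,d\mu(\omega)$), the left-hand side of \cref{eq:CDLKT} is exactly $e^{a_\nu(\alpha)}-1$ once the normalization $E_\nu[a_\nu]=0$ is imposed. Write
\begin{align*}
I_\nu(\alpha)=\int \frac{e^{u(\alpha,\omega)/\lambda}}{Z(\omega;\nu)}\ d\mu(\omega), && Z(\omega;\nu)=\int e^{u(\alpha',\omega)/\lambda}\ d\nu(\alpha').
\end{align*}
Then \cref{eq:CDLKT} says $I_\nu\le 1$ everywhere with equality $\nu$-a.e., and \cref{eq:VKT} says $\log I_\nu\le 0$ in the same sense, provided we can identify $a_\nu=\log I_\nu$. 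Since $u\le\overline u$ and $e^{u/\lambda}>0$, we have $0<Z(\omega;\nu)\le e^{\overline u/\lambda}$, so $\log Z$ is finite and the identification makes sense.

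\emph{Step 1 ($\Rightarrow$).} Assume \cref{eq:CDLKT}. Take the candidate pair $a=\log I_\nu$ and $b=\log Z(\cdot;\nu)$, and check the two equations of \cref{Schr\"odingereq}.
\begin{itemize}
\item The first equation, $e^{a(\alpha)}=\int e^{u/\lambda-b}\,d\mu$, holds everywhere by definition of $a$.
\item For the second, $a=0$ $\nu$-a.e. because $I_\nu=1$ $\nu$-a.e. Hence $\int e^{u/\lambda-a}\,d\nu=Z=e^{b}$.
\end{itemize}
So $(a,b)$ is a pair of Schr\"odinger potentials. By the uniqueness-up-to-translation part of \cref{propertiesofEOT}, $b_\nu=\log Z+c$ $\mu$-a.e. for some constant $c$. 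This $\mu$-a.e.\ statement suffices, because $a_\nu$ is defined from $b_\nu$ through a $\mu$-integral. Plugging in gives $a_\nu(\alpha)=\log I_\nu(\alpha)-c$ for \emph{every} $\alpha$. The normalization $E_\nu[a_\nu]=0$, together with $\log I_\nu=0$ $\nu$-a.e., forces $c=0$. Therefore $a_\nu=\log I_\nu\le 0$ everywhere, with equality $\nu$-a.e.

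\emph{Step 2 ($\Leftarrow$).} Assume $a_\nu\le 0$ everywhere and $a_\nu=0$ $\nu$-a.e. The second Schr\"odinger equation gives $e^{b_\nu(\omega)}=\int e^{u/\lambda-a_\nu}\,d\nu=Z(\omega;\nu)$ for $\mu$-a.e. $\omega$, since $a_\nu$ vanishes $\nu$-a.e. Substituting into the everywhere-defined first equation yields $e^{a_\nu(\alpha)}=I_\nu(\alpha)$ for every $\alpha$. Hence $I_\nu\le 1$ everywhere with equality $\nu$-a.e., which is \cref{eq:CDLKT}.

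The main obstacle is the bookkeeping of null sets rather than any computation. Potentials are a priori only unique $\nu\otimes\mu$-a.e.\ up to translation, while both conditions are required \emph{everywhere} on $A$. The point I would stress is that, with the chosen version, $a_\nu$ depends on $b_\nu$ only through $\mu$-integrals. So fixing $b_\nu$ $\mu$-a.e., and pinning the constant via $E_\nu[a_\nu]=0$, determines $a_\nu$ pointwise on all of $A$. This is exactly what lets the off-support inequality transfer between the two formulations. The only other point to check is that the constant $c$ is forced to vanish, which follows because $\log I_\nu$ has $\nu$-mean zero under either hypothesis.
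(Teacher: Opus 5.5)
Your proof is correct. The ($\Leftarrow$) direction is the same as the paper's: $a_\nu=0$ $\nu$-a.e.\ forces $e^{b_\nu}=Z(\cdot;\nu)$ through the second Schr\"odinger equation, and the everywhere-defined first equation then gives $e^{a_\nu}=I_\nu$ pointwise.

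The ($\Rightarrow$) direction differs in how you obtain $b_\nu=\log Z(\cdot;\nu)$. The paper invokes \cref{th:FOC}, so that \cref{eq:CDLKT} makes $\nu$ a maximizer of $f$. It then uses the identification $a=0$, $b=\log Z$ of the potentials at an optimal marginal (\cref{th:thennumaximizesV}, \cref{lem:propertiesoff}). You instead check directly that $(\log I_\nu,\log Z)$ solves the Schr\"odinger system. This is the same verification the paper performs inside the proof of \cref{th:thennumaximizesV}, but you use it without passing through optimality. You then use uniqueness up to translation plus $E_\nu[a_\nu]=0$ to pin the constant.

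Your route makes the equivalence a statement about the Schr\"odinger system for a fixed $\nu$ alone. It does not need the hypothesis of \cref{th:FOC} that $f$ attains its maximum, which the statement does not assume. It also makes explicit the translation constant that the paper suppresses when it says ``$b_\nu$ is the log-partition function.'' The paper's route buys the interpretation that \cref{eq:VKT} is visibly the optimality condition for $\nu$, but that is not needed for the equivalence itself.
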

\begin{proof}
    
 By \cref{th:FOC}, if \cref{eq:CDLKT} holds then $b_\nu$ is the log-partition function, and thus
\begin{align*}
    &\int \frac{e^{u(\alpha,\omega)/\lambda}}{\sum_{\alpha'\in A} e^{u(\alpha'\omega)/\lambda}\ \nu(\alpha') }-1\ d\mu(\omega)= \int e^{\frac{u(\alpha,\omega)}{\lambda}-b_\nu(\omega)}\ d\mu(\omega)-1 = e^{a_\nu(\alpha)}-1\\
    &\implies \text{sign}\pare{\int \frac{e^{u(\alpha,\omega)/\lambda}}{\sum_{\alpha'\in A} e^{u(\alpha'\omega)/\lambda}\ \nu(\alpha') }-1\ d\mu(\omega)}=\text{sign}(a_\nu(\alpha))
\end{align*}
which implies \cref{eq:VKT}. Conversely, if $a_\nu$, $b_\nu$ satisfy the Schr\"odinger equations and \cref{eq:VKT} holds, then $b_\nu$ is the log-partition function. The sign equation above holds and \cref{eq:CDLKT} is thus implied.
\end{proof}

Recall, by \cref{th:potentials}, $ \dd_{\delta _\alpha}V(\nu)=a_\nu(\alpha)$. In words, the preceding derivations belie a sequence of logical steps that stem from the basic results in entropic optimal transport: the Schr\"odinger bridge must exhaust all matching gains (subject to the entropic cost), for any fixed $\nu$; duality tells us that the value $V$ can be decomposed into action and state potentials; by the Envelope Theorem, the action potential $a_{\nu}(\alpha)$ encodes the marginal gain (in $V$) of re-assigning mass towards a given action $\alpha$. It follows that $\nu$ is optimal only if the marginal gain is zero almost everywhere on the consideration set and non-positive everywhere else. Thus we arrive at a simple interpretation of the Kuhn-Tucker condition of $f$ in \cref{th:FOC} as the condition which is satisfied when there are no possible gains from re-assigning mass to a given action $\alpha$, assuming matching gains are likewise exhausted.\\

The continuous case is more delicate because it is not a continuous maneuver to re-assign mass to a single point when points have zero mass, but the idea is the same: the necessary condition for $\nu$ to be optimal is that re-assigning mass to any $\alpha$ from other points in the support offers non-positive gains, and for $\nu$-a.e. $\alpha$, re-assigning mass to $\alpha$ offers exactly zero gains.

\subsection{A Second Look at the Blahut-Arimoto Algorithm}\label{sec:blahutarimoto}

The Blahut-Arimoto algorithm is a common method for finding a $\nu$ that comes close to maximizing $V$. It works by increasing $f$ after every iteration. But given that $f$ is already opaque, it can be challenging to imagine exactly why the algorithm works. CDL explain that the algorithm ``twists the state-dependent choice in the direction of the high payoff states [in a way that] ensures that these twists average out to one.'' But what exactly are these twists and what it would mean for an agent to actually implement this algorithm?\\

As we will show, the Blahut-Arimoto can be interpreted as twisting $\nu_n$ in the direction of $\exp\pare{a_{\nu_n}}$. Equipped with our understanding of what the action potential $a_\nu(\alpha)=\dd_{\delta_\alpha}V(\nu)$ represents---i.e., the marginal payoff from uniformly transporting mass towards the point $\alpha$---the motivation for the agent to `twist' $\nu$ in such a way becomes eminently clear. The Blahut-Arimoto algorithm can be seen as `gradient ascent' for $\nu$, i.e.,
\begin{align*}
    \log \nu_{n+1}(\alpha)=\log \nu_n(\alpha)+\dd_{\delta_\alpha} V(\nu_n)
\end{align*}

As an analogy, consider the work-life balance of a junior associate in a Wall Street firm. She has two actions: be at home with family or be at work, ready to address the firm's needs at a moment's notice. Every evening, she i) commits to spending a certain number of hours at home and ii) reads up on international markets to project when she will likely be needed at work, and commits to being at work during specific times of day. The following day, $\omega \in \Omega$ is realized, representing the times of the day when the labor of a junior associate is in highest demand. In this analogy, the notion of `exhausting matching surplus subject to information costs' represents optimally acquiring information and choosing when to be at work, conditional on her commitment to only work a certain number of hours. The notion of her action potential represents whether she would achieve higher or lower payoffs if she were to commit to greater or fewer hours of work overall. The Blahut-Arimoto algorithm corresponds roughly to iteratively increasing or decreasing the number of hours or work, conditional on optimally allocating her work hours based on the information she acquires. \\

To arrive at the Blahut-Arimoto algorithm from the Schr\"odinger bridge, we actually have begin our discussion elsewhere. Previous approaches to rational inattention emphasize a posterior-based approach; that is, finding posteriors that average to the prior. The discrete Schr\"odinger Bridge problem involves a similar problem of finding weights which average properly---albeit with two marginal constraints instead of one. In particular, it is an instance of the \textit{matrix scaling problem}, which seeks to find two sets of weights which, when organized into diagonal matrices, scales a given matrix to be row- and column-stochastic. \\

Formally, let $A$ and $\Omega$ be discrete. Given $\nu$, the Schr\"odinger bridge dual problem can be re-written as the problem of finding $a,b$ such that
\begin{align*}
    \sum_{\alpha \in A}\nu(\alpha)e^{u(\alpha,\omega)/\lambda}e^{-a(\alpha)}e^{-b(\omega)}=1\\
    \sum_{\omega\in \Omega}\mu(\omega)e^{u(\alpha,\omega)/\lambda}e^{-a(\alpha)}e^{-b(\omega)}=1
\end{align*}
almost surely. Recall that then,
\begin{align*}
    V(\nu)=\sum_{\alpha \in A}a(\alpha)\cdot \nu(\alpha)+\sum_{\omega\in \Omega} b(\omega)\cdot \mu(\omega)&& P(\alpha,\omega)=\nu(\alpha)\mu(\omega)e^{u(\alpha,\omega)/\lambda}e^{-a(\alpha)}e^{-b(\omega)}
\end{align*}
The high-level idea is that if the agent is efficiently matching states and characteristics subject to the cost, then the payoff $V$ should be separable into a utility shock from sub-optimal $\nu$ and a utility shock from sub-optimal $\mu$; that is, there should be no further matching surplus. The choice probability $P(\alpha,\omega)$ should be higher or lower than the baseline $\nu(\alpha)\mu(\omega)$ depending on the \textit{relative} utility of event $(\alpha,\omega)$---relative, that is, given the average utility shocks $a(\alpha)$ and $b(\omega)$. Finding a pair of potentials is, at face value, non-trivial. Fortunately, the Sinkhorn algorithm, which involves alternating between defining $a_n$ and defining $b_n$ using the equations above, converges to a unique solution, given $\nu$ and $\mu$. \\

The rational inattention problem has a similar flavour, but differs in one key respect: we are not given $\nu$, but we are given $a$, or at least we know exactly what the optimal potential $a$ \textit{should} be---namely, zero. So what if instead of being taking $\nu$ as given, we took $a(\alpha)=0$ as given? We would instead need to find $b$ and $\nu$ such that
\begin{align*}
    \sum_{\alpha \in A}\nu(\alpha)e^{u(\alpha,\omega)/\lambda}e^{-b(\omega)}=1\\
    \sum_{\omega\in \Omega}\mu(\omega)e^{u(\alpha,\omega)/\lambda}e^{-b(\omega)}\leq1 && \text{with equality on the support of }\nu
\end{align*}
In this setting, the agent would first find $b(\omega)$ which satisfies the second condition
\begin{align}\label{eq:eubinequality}
    \sum_{\omega\in \Omega}\mu(\omega)e^{u(\alpha,\omega)/\lambda}e^{-b(\omega)}\leq1 
\end{align}
with equality at some $\alpha$. For each $\alpha$ where \cref{eq:eubinequality} holds with equality, set
\begin{align*}
    P(\omega|\alpha)=\mu(\omega)e^{u(\alpha,\omega)/\lambda}e^{-b(\omega)}
\end{align*}
and finally, the agent finds weights such that the posteriors average to the prior:
\begin{align}\label{eq:weightstosumtoone}
    \frac{1}{\mu(\omega)}\sum_{\alpha \in A}\nu(\alpha)P(\omega|\alpha)=\sum_{\alpha \in A}\nu(\alpha)e^{u(\alpha,\omega)/\lambda}e^{-b(\omega)}=1
\end{align}
The idea here is that $\frac{u(\alpha,\omega)}{\lambda}-b(\omega)$ represents the utility of action $\alpha$ in state $\omega$ \textit{adjusting for the utility in state $\omega$ averaged over other options.} If the agent receives the recommendation $\alpha$, he infers that the state is more likely to be one where the action is ``good.'' But the agent cannot compare states simply by utility, because some states are inherently better than others. Instead, he infers that the action is good \textit{relative to the state}---that is, adjusting for $b(\omega)$. Indeed, we arrive at a invariant likelihood result which describes the likelihood ratio of two states given the same signal (instead of describing the likelihood of the same state $\omega$ given two different signals $\alpha,\alpha'$):
\begin{align*}
    \frac{P(\omega|\alpha)}{P(\omega'|\alpha)}=\exp\pare{\brac{\frac{u(\alpha,\omega)}{\lambda}-b(\omega)}-\brac{\frac{u(\alpha,\omega')}{\lambda}-b(\omega')}}
\end{align*}
Once the agent defines $P(\omega|\alpha)$, all he needs to do is then define a set of probability weights such that the posterior averages to the prior $\mu(\omega)$, as seen in \cref{eq:weightstosumtoone}.\\

The trouble, as usual, is that finding a coherent pair $(\nu,b)$ is a non-trivial task. The control function $-b(\omega)$ adjusts for the average utility shock in state $\omega$---but average with respect to what? Under the optimal information policy, the average is with respect to $\nu$---a thus far undefined set of weights for the posteriors $P(\omega|\alpha)$, which are to be defined by $b(\omega)$.\\

We can augment the Sinkhorn algorithm as follows. The agent starts with a probability $\nu_0(\alpha)$ supported over the entirety of $A$, and defines the candidate potential $b_0(\omega)$---the state-level utility shock---by
\begin{align*}
    b_0(\omega)=\log\pare{\sum_{\alpha\in A}\nu_0(\alpha) e^{u(\alpha,\omega)/\lambda}}
\end{align*}
The agent then defines the other candidate potential $a_0$ via the Schr\"odinger equations:
\begin{align*}
    a_0(\alpha)=\log\pare{\sum_{\omega\in \Omega}\mu(\omega)e^{u(\alpha,\omega)/\lambda}e^{-b_0(\omega)}}
\end{align*}
This signifies, after controlling for state-specific average utility shocks, the residual variation across actions that makes certain actions more or less preferable on average. If $\exp\pare{\frac{u(\alpha,\omega)}{\lambda}-b(\omega)}$ is consistently $>1$, then that suggests the agent should be shifting more probability weight towards $\alpha$. To wit, $\nu_1$ is set so that it updates $\nu_0$ in that direction:
\begin{align*}
    \nu_1(\alpha)=\nu_0(\alpha) e^{a_0(\alpha)}
\end{align*}
The entire algorithm can be summed up as follows:
\begin{align*}
    &b_n(\omega)=\log\pare{\sum_{\alpha \in A}\nu_n(\alpha)e^{u(\alpha,\omega)/\lambda}}\\
    &a_n(\alpha)=\log\pare{\sum_{\omega\in \Omega}\mu(\omega)e^{u(\alpha,\omega)/\lambda}e^{-b_n(\omega)}}\\
    &\nu_{n+1}(\alpha)=\nu_n(\alpha)e^{a_n(\alpha)}
\end{align*}
By plugging in the first equation into the second, and the second into the third, we arrive at the \textit{Blahut-Arimoto algorithm}:
\begin{align*}
    \nu_{n+1}(\alpha)=\sum_{\omega\in\Omega}  \frac{\mu(\omega)e^{u(\alpha,\omega)/\lambda}\nu_n(\alpha)}{\sum_{\alpha'\in A}e^{u(\alpha',\omega)/\lambda}\nu_n(\alpha')}
\end{align*}
This also shows that each $\nu_n$ is a valid probability in that it is nonnegative and sums to unity. It can be shown that $f(\nu_{n})$ increases in $n$ and, appealing to compactness and continuity, converges to a maximum of $f$. The algorithm does not guarantee convergence of $\nu_n$ \textit{per se}. However, we have guarantees from \cref{th:FOC} that if $\nu_n$ maximizes $f$, then
\begin{align*}
   &\sum_{\omega\in\Omega}  \frac{\mu(\omega)e^{u(\alpha,\omega)/\lambda}}{\sum_{\alpha'\in A}e^{u(\alpha',\omega)/\lambda}\nu_n(\alpha')}=1
\end{align*}
for all $\nu_n(\alpha)>0$, which implies $\nu_{n+1}(\alpha)=\nu_n(\alpha)$. On the other hand, if $\nu_{n+1}=\nu_n=\nu^*$, then
\begin{align*}
    \sum_{\omega\in\Omega}  \frac{\mu(\omega)e^{u(\alpha,\omega)/\lambda}}{\sum_{\alpha'\in A}e^{u(\alpha',\omega)/\lambda}\nu^*(\alpha')}=1
\end{align*}
and if $\nu_n(\alpha)$ is decreasing to zero, then
\begin{align*}
    \sum_{n=0}^\infty\log\pare{\sum_{\omega\in\Omega}  \frac{\mu(\omega)e^{u(\alpha,\omega)/\lambda}}{\sum_{\alpha'\in A}e^{u(\alpha',\omega)/\lambda}\nu_n(\alpha')}}=-\infty
\end{align*}
It should be noted that while each $\nu_n$ is a probability, each candidate potential $a_n$ and $b_n$ are not necessarily Schr\"odinger potentials, and thus do not necessarily define valid information policies via \cref{propertiesofEOT}. Rather, they are smoothmax proxies, and as $n$ grows, $V(\nu_n)$ is increasingly approximated by
\begin{align*}
    V(\nu_n)\approx \sum_{\alpha\in A}a_n(\alpha)\cdot \nu_{n}(\alpha)+ \sum_{\omega\in \Omega} b_n(\omega)\cdot \mu(\omega)
\end{align*}

\section{Conclusion}

We formulate the rational inattention problem as an optimal Schr\"odinger bridgehead problem. We show that the foundational results from rational inattention and their generalizations to continuous spaces follow almost immediately from the theory of entropic optimal transport. Our approach emphasizes the importance of the dual variables---called the \textit{Schr\"odinger potentials}---which are infinite-dimensional analogs of the Lagrange multipliers of the rational inattention problem. Future work could generalize our work to more general classes of cost functions using a more general theory of regularized optimal transport.

\printbibliography

\section*{Appendix}

\begin{proof}[Proof of \cref{gibbsproperty}]
    Let $H$ denote an arbitrary Bayes-plausible probability and $P^\varepsilon=(1-\varepsilon)P+\varepsilon H$. The set of Bayes-plausible probabilities is convex, so the path $P^\varepsilon$ lies in the set. For a dominating measure $\eta\in \Delta A$, re-write
    \begin{align*}
        D_{KL}(P\|P_\alpha\otimes \mu)&=\int \log\pare{\frac{dP}{d(\eta\otimes \mu)}}\ dP - \int \log\pare{\frac{dP_\alpha}{d\eta }}\ dP_\alpha
    \end{align*}
    We have
    \begin{align*}
        \frac{d}{d\varepsilon}\bigg|_{\varepsilon=0} \int \log\pare{\frac{dP^\varepsilon}{d(\eta\otimes \mu)}}\ dP^\varepsilon&=\frac{d}{d\varepsilon}\bigg|_{\varepsilon=0} \int \log\pare{\frac{dP}{d(\eta\otimes \mu)}}\ dP^\varepsilon+\frac{d}{d\varepsilon}\bigg|_{\varepsilon=0} \int \log\pare{\frac{dP^\varepsilon}{d(\eta\otimes \mu)}}\ dP
    \end{align*}
    Evaluating the second term gives us
    \begin{align*}
        \frac{d}{d\varepsilon}\bigg|_{\varepsilon=0} \int \log\pare{\frac{dP^\varepsilon}{d(\eta\otimes \mu)}}\ dP=\int \frac{\frac{d(H-P)}{d(\eta\otimes\mu)}}{\frac{dP}{d(\eta\otimes\mu)}}\ dP=0
    \end{align*}
    And therefore, 
    \begin{align*}
        \frac{d}{d\varepsilon}\bigg|_{\varepsilon=0} \int \log\pare{\frac{dP^\varepsilon}{d(\eta\otimes \mu)}}\ dP^\varepsilon&=\frac{d}{d\varepsilon}\bigg|_{\varepsilon=0} \int \log\pare{\frac{dP}{d(\eta\otimes \mu)}}\ dP^\varepsilon=\int \log\pare{\frac{dP}{d(\eta\otimes \mu)}}\ d(H-P)
    \end{align*}
    Likewise,
    \begin{align*}
        \frac{d}{d\varepsilon}\bigg|_{\varepsilon=0}\int \log\pare{\frac{dP_\alpha^\varepsilon}{d\eta }}\ dP_\alpha^\varepsilon=\int \log\pare{\frac{dP_\alpha}{d\eta }}\ d(H_\alpha-P_\alpha)
    \end{align*}
    And so,
    \begin{align*}
        \frac{d}{d\varepsilon}\bigg|_{\varepsilon=0} D_{KL}(P^\varepsilon\|P^\varepsilon_\alpha\otimes\mu)=\int \log\pare{\frac{dP}{d(P_\alpha\otimes\mu)}}\ d(H-P)
    \end{align*}
    Suppose the integrand is not a $P(\alpha|\omega)$-plateau for $\mu$-a.e. $\omega$. For every $\omega$ where the integrand is not a plateau, define
    \begin{align*}
        A(\omega)=\bigg\{\alpha': \frac{u(\alpha',\omega)}{\lambda}-\log\pare{\frac{dP}{d(P_\alpha\otimes \mu)}(\alpha',\omega)}>\int \frac{u(\alpha,\omega)}{\lambda}-\log\pare{\frac{dP}{d(P_\alpha\otimes \mu)}(\alpha,\omega)}\ dP(\alpha|\omega) \bigg\}
    \end{align*}
    i.e., the set of all points whose integrand is ``above average.'' Define $H(\cdot |\omega)$ such that it is supported only on $A(\omega)$ and define $H(\alpha,\omega)=H(\alpha|\omega)\mu(\omega)$ to arrive at the necessary contradiction.
\end{proof}

\begin{proof}[Proof of \cref{th:FOC}]
    Let $\psi$ denote some other probability over $A$ and consider the Gateaux derivative of $f$ at $\nu$ in the direction of $\psi$. By \cref{lem:propertiesoff},
    \begin{align*}
        \dd_\psi f(\nu)=\int \frac{\int e^{u(\alpha,\omega)/\lambda}\ d\psi(\alpha)}{\int e^{u(\alpha',\omega)/\lambda}\ d\nu(\alpha')}-1\ d\mu(\omega)
    \end{align*}
    which implies that if $\nu$ maximizes $f$, then for any arbitrary $\psi$,
    \begin{align}\label{psileq}
        \int \frac{\int e^{u(\alpha,\omega)/\lambda}\ d\psi(\alpha)}{\int e^{u(\alpha',\omega)/\lambda}\ d\nu(\alpha')}\ d\mu(\omega)\leq 1
    \end{align}
    If there exists $\alpha$ such that the converse of \cref{FOCleq} holds, then then $\psi=\delta_\alpha$ would yield $\dd_\psi f(\nu)>0$. Thus, if $\nu$ maximizes $f$, then \cref{FOCleq}. Integrating the left-hand side of \cref{psileq} with respect to $\nu$ reveals that \cref{FOC} holds $P$-a.s. \\

    Conversely, suppose the first-order conditions hold. Then, for any other $\psi$,
    \begin{align*}
        \pare{\int \frac{\int e^{u(\alpha,\omega)/\lambda}\ d\nu(\alpha)}{\int e^{u(\alpha',\omega)/\lambda}\ d\psi(\alpha')}\ d\mu(\omega)}^{-1}< \int \frac{\int e^{u(\alpha,\omega)/\lambda}\ d\psi(\alpha)}{\int e^{u(\alpha',\omega)/\lambda}\ d\nu(\alpha')}\ d\mu(\omega)\leq 1
    \end{align*}
    unless $\dis \frac{\int e^{u(\alpha,\omega)/\lambda}\ d\psi(\alpha)}{\int e^{u(\alpha',\omega)/\lambda}\ d\nu(\alpha')}$ is constant $\mu$-a.s. Else,
    \begin{align*}
        \dd_\nu f(\psi)=\int \frac{\int e^{u(\alpha,\omega)/\lambda}\ d\nu(\alpha)}{\int e^{u(\alpha',\omega)/\lambda}\ d\psi(\alpha')}-1\ d\mu(\omega)>0
    \end{align*}
\end{proof}

\noindent \textit{Postscript.} This comment originated from the study a different model which could be interpreted as a Shannon RI model with a regularization penalty (of size $\alpha \geq  0$) on the divergence between the action distribution and some ``prior'' over actions. Such a model nests the vanilla Shannon RI model studied in this paper on the boundary $\alpha = 0$. However, it turns out that the model is significantly more well-behaved when $\alpha>0$. Hence, we treat the $\alpha=0$ separately in this comment.

\end{document}